\newtheorem{precor}{{\bf Corollary}}
\newenvironment{cor}{\begin{precor}{\hspace{-0.5
em}{\bf.\ }}}{\end{precor}}
\newtheorem{prere}{{\bf Remark }}
 \newtheorem{precon}{{\bf Conjecture}}
\newtheorem{predefin}{{\bf Definition}}
\newenvironment{defin}[1]{\begin{predefin}{\hspace{-0.5
em}{\bf.\ }}{\rm
#1}\hfill{$\blacktriangleleft$}}{\end{predefin}}
\newtheorem{preexm}{{\bf Example}}
\newenvironment{exm}[1]{\begin{preexm}{\hspace{-0.5
em}{\bf.\ }}{\rm #1}\hfill{$\blacktriangleright$}}{\end{preexm}}
\newtheorem{preappl}{{\bf Application}}
\newtheorem{prelem}{{\bf Lemma}}
\newenvironment{lem}{\begin{prelem}{\hspace{-0.5
em}{\bf.\ }}}{\end{prelem}}
\newtheorem{preClaim}{{\bf Claim}}
\newtheorem{preproof}{{\bf Proof.\ }}
\newenvironment{proof}[1]{\begin{preproof}{\rm
#1}\hfill{$\blacksquare$}}{\end{preproof}}
\newtheorem{presproof}{{\bf Sketch of Proof.\ }}
\newtheorem{prethm}{{\bf Theorem}}
\newenvironment{thm}{\begin{prethm}{\hspace{-0.5
em}{\bf.\ }}}{\end{prethm}}
 \newtheorem{prealphthm}{{\bf Theorem}}
\newtheorem{prepro}{{\bf Proposition}}
\newenvironment{pro}{\begin{prepro}{\hspace{-0.5
em}{\bf.\ }}}{\end{prepro}}
\newtheorem{preprb}{{\bf Problem}}
\def\conct[#1, #2]{\mbox {${#1} \leftrightarrow {#2}$}}
\def\dconct[#1, #2]{\mbox {${#1} )=arrow {#2}$}}
\def\deg[#1, #2]{\mbox {$d_{_{#1}}(#2)$}}
\def\mindeg[#1]{\mbox {$\delta_{_{#1}}$}}
\def\maxdeg[#1]{\mbox {$\Delta_{_{#1}}$}}
\def\outdeg[#1, #2]{\mbox {$d_{_{#1}}^{^+}(#2)$}}
\def\minoutdeg[#1]{\mbox {$\delta_{_{#1}}^{^+}$}}
\def\maxoutdeg[#1]{\mbox {$\Delta_{_{#1}}^{^+}$}}
\def\indeg[#1, #2]{\mbox {$d_{_{#1}}^{^-}(#2)$}}
\def\minindeg[#1]{\mbox {$\delta_{_{#1}}^{^-}$}}
\def\maxindeg[#1]{\mbox {$\Delta_{_{#1}}^{^-}$}}
\def\isdef{\mbox {$\ \stackrel{\rm def}{=} \ $}}
\def\dre[#1, #2, #3]{\mbox {${\cal E}_{_{#3}}(#1, #2)$}}
\def\pdre[#1, #2, #3]{\mbox {${\cal P}_{_{#3}}(#1, #2)$}}
\def\var[#1, #2]{\mbox {${\rm Var}_{_{#1}}(#2)$}}
\def\ls[#1]{\mbox {$\xi^{^{#1}}$}}
\def\onvhom[#1, #2]{\mbox {${\rm Hom^{v}}(#1, #2)$}}
\def\onehom[#1, #2]{\mbox {${\rm Hom^{e}}(#1, #2)$}}
\def\core[#1]{\mbox {$#1^{^{\bullet}}$}}
\def\cay[#1, #2]{\mbox {${\rm Cay}({#1}, {#2})$}}
\def\cays[#1, #2]{\mbox {${\rm Cay_{s}}({#1}, {#2})$}}
\def\dirc[#1]{\mbox {$\stackrel{)=arrow}{C}_{_{#1}}$}}
\def\cycl[#1]{\mbox {${\bf Z}_{_{#1}}$}}
\def\sdg[#1]{\mbox {$\stackrel{\leftrightarrow}{#1}$}}
\long\def\Perm[#1]{\mbox{$\lfloor #1 \rceil$}}
\long\def\nat[#1]{\mbox{$\widehat{1...#1}$}}
\newcommand{\matr}[1]{\mathrm{#1}} % For matrices, graphs, ... (structured objects)
\def\Enc{\rm Enc}
\def\Dec{\rm Dec}
\def\Gen{\rm Gen}
\def\Init{\rm Init}
\newcommand{\f}{\mathbb{F}}
\newcommand{\real}{\mathbb{R}}
\newcommand{\xk}{\mathbf{s}( t) }
\newcommand{\uk}{\mathbf{p}( t) }
\newcommand{\kx}{\mathbf{s}( t+1) }
\newcommand{\ke}{\mathbf{e}( t+1)}
\newcommand{\xh}{\widehat{\mathbf{s}}( t)}
\newcommand{\hx}{\widehat{\mathbf{s}}( t+1)}
\newcommand{\hp}{\widehat{\mathbf{p}}}
\newcommand{\uh}{\widehat{\mathbf{p}}( t)}
\newcommand{\ek}{\mathbf{e}( t)}
 \newcommand{\hplainvec}{\widehat{\mathbf{p}}}
\newcommand{\Field}{\mathbb{F}}
\newcommand{\Adv}{{\mathbf{Adv}}}
\newcommand{\Insec}{{\mathbf{Insec}}}
\newcommand{\negl}{{\matr{negl}}}
\newcommand{\outputs}{{\rightarrow}}
\newcommand{\state}{\mathbf{s}}
\newcommand{\hstate}{\widehat{\mathbf{s}}}
\newcommand{\plain}{p}
\newcommand{\cipher}{c}
\newcommand{\keystreamvec}{\mathbf{z}}
\newcommand{\plainvec}{\mathbf{p}}
\newcommand{\ciphervec}{\mathbf{c}}
\newcommand{\memvec}{\mathbf{\mathfrak{m}}}
\newcommand{\key}{\kappa}
\newcommand{\Lmat}{\mathbf{L}}
\newcommand{\Wmat}{\mathbf{W}}
\newcommand{\Bmat}{\mathbf{B}}
\newcommand{\Fmat}{\mathbf{F}}
\newcommand{\Dmat}{\mathbf{D}}
\newcommand{\Emat}{\mathbf{E}}
\newcommand{\Rmat}{\mathbf{R}}
\newcommand{\Amat}{\mathbf{A}}
\newcommand{\memM}{\mathbf{M}}
\newcommand{\BigPi}{\wp}
\newcommand{\encryptionfun}{\varepsilon}
\newcommand{\decryptionfun}{\delta}
\newcommand{\kernel}{\mathrm{Kernel}}
\newcommand{\transmitter}{\mathrm{Enc}}
\newcommand{\receptor}{\mathrm{Dec}}
\newcommand{\Q}{\mathbf{Q}}
\newcommand{\hkeystreamvec}{\mathbf{\widehat{z}}}
 \def\authora{Amir Daneshgar\thanks{Correspondence should be addressed to {\tt daneshgar@sharif.ir}.}}
\def\authorb{Fahimeh Mohebbipoor}
 \def\addressa{
{\small Department of Mathematical Sciences}\\
{\small Sharif University of Technology}\\
{\small P.O. Box {\rm 11155--9415}, Tehran, Iran.}
}
 \def\addressb{
{\small Faculty of Mathematics and Computer Science}\\
{\small Kharazmi University}\\
{\small P.O. Box 15719-14911, Tehran, Iran.}
}
\title{A Secure Self-synchronized Stream Cipher}
\author{
\authora \\
\addressa \\[3mm]
\authorb\\
\addressb\\
}
 \date{}
\begin{document}
\maketitle
\begin{abstract}
\noindent We follow two main objectives in this article. On the one hand, we introduce a security model called LORBACPA$^+$ for self-synchronized stream ciphers which is stronger than the blockwise
LOR-IND-CPA, where we show that standard constructions as delayed CBC or similar existing 
self-synchronized modes of operation are not secure in this stronger model. Then, on the other hand, following contributions of G.~Mill\'{e}rioux et.al., we introduce a new self-synchronized stream cipher and prove its security in LORBACPA$^+$ model.  
\end{abstract}
%----------------------------------------------------section{Introduction}-------------
\section{Introduction}

 Analysis and design of stream ciphers is a classic and among the oldest subjects in
cryptography, however, amazingly, there are still some challenging problems to be addressed by the experts in the field (e.g. see \cite{Pre08,DK08}).

 Among these interesting and challenging problems one may recall the analysis and design of self-synchronized stream ciphers where despite the efforts made so far, there does not exist a deep understanding of design methods, analysis and security models for this kind of stream ciphers yet.

\subsection{Main results}

 Although, it is hopeless to think of CCA secure self-synchronized stream ciphers
because of their error-correction properties, we show that it is possible to design such systems which are secure in quite stronger models than the classical CPA setting.
Our main contribution in this article is to propose such a self-synchronized stream cipher along with the security model.

 Before we proceed, it is instructive to note that our results are based on three fundamental contributions in cryptography.

 On the one hand, not only our security model are built on the basic contribution of
Bellare, Desai, Jokipii and Rogaway in \cite{BDJR97} who introduced the concept of
{\it left-or-right indistinguishability} (${\rm LOR-IND} $) but also we adopt their fundamental approach for the method of security proof in Section~\ref{sec:cryptoanalysis}.

 On the other hand, we will basically adopt the notion of {\it blockwise security} introduced independently by
Joux, Martinet and Valette \cite{JMV02} and Bellare, Kohno and Namprempre \cite{BKN04} (also see \cite{BR05,FJP04,FJMV03,Jou10} for the background) since our self-synchronized stream cipher
can be considered as an extension of modes of operation for block-ciphers.

 Although, the adaption of these ideas are crucial in our contribution, but our system
would not be secure without applying basic ideas from what G.~Mill\'{e}rioux et.al.
has done on connecting the design principles of self-synchronized stream cipher to
basic concepts of synchronized control systems (see \cite{PGM11},\cite{PM13} and references therein).

 The canonical form of a Self-Synchronizing
Stream Cipher (an SSSC in short) is made of a combination of a shift register, which
acts as a state register with the ciphertext as input, together with a filtering
function that provides the running key stream and an output function which combines the running key stream with plaintext to produce the cipher text. In particular, it is known that
such canonical SSSC's are ${\rm IND-CPA}$ secure under some conditions on the filtering function \cite{BDM15,DGM16}.

 As a matter of fact, using the control-theoretic approach of G.~Mill\'{e}rioux et.al., we have been able to make sure about the existence of {\it free random initial states} in our self-synchronized stream cipher
that will guarantee the security of the system in a stronger setting than the CPA model
traditionally used for modes of operation. These basic ideas will give rise to our proposed system introduced in Section~\ref{sec:DSPLCDESCRP}. Also, we will discuss some more control-theoretic aspects of our designs in Section~\ref{sec:conclusion}.

However, to begin, after covering the necessary background in the rest of this section
we will concentrate on the details of the proposed security model (i.e. the LORBACPA$^+$ setup) in Section~\ref{sec:Sm}. Also, in this section we will elaborate on mentioning the basic ideas and problems using the classical CBC and CFB modes of operation and will mention two premature modified versions of these modes that will both illustrate the basic design principles on the one hand, and shows that the basic CBC and CFB modes are not flexible enough to lead to a fully secure system in our security model. This in conjunction with the control theoretic approach of G.~Mill\'{e}rioux et.al. shows the importance of handling
the initial vectors to gain maximal security and operational efficiency in the design of stream ciphers.
 The main setup of our proposed self-synchronized stream cipher will be introduced in Section~\ref{sec:DSPLCDESCRP} where we also present the full security proof in our security model in Section~\ref{sec:cryptoanalysis}.

As for the notations,
the symbol $\Field_{_{q}}=GF(q)$ for $q$ a prime power, stands for the finite field on $q$ elements, $\real$ is the field of real numbers, and
$[n]$ stand for the set $\lbrace1,2 ..., n\rbrace$, respectively. Also, if $b \in \{0,1\}$ then $\bar{b} \in \{0,1\}$ is the other complement bit, i.e, $\bar{b} \isdef b-1$ modulo $2$. 

 A vector of $n$ elements in $\Field_{_{q}}^{n}$ is denoted by
$\mathbf{s}=(s_{1}, s_{2}, \cdots, s_{n}) ^{T}$.
If the elements of a vector $\mathbf{s}$ are functions of a variable $t$, then we write
$\xk$ to refer to the vector $\mathbf{s}$ at time $t$.
The set of $n\times n $ matrices with entries in $\Field_{_{q}}$ is denoted by $\mathcal{M}^{n \times n}$. The symbol $\mathbf{I}$ stands for the identity matrix, and
$\mathbf{{\texttt{0}}}$
stands for the zero matrix (when the dimension is clear from the context).
The symbol $\state \overset{\$}{\leftarrow} \Field_{_{q}}^{n}$ demonstrates the action of picking $\state$ uniformly at random from the set $\Field_{_{q}}^{n}$.

 \subsection{Self-synchronized stream-ciphers}\label{sec:basics}

 From a data-transmission point of view a self-synchronized stream cipher
is a master-slave communication system consisting of a transmitter and a receiver as

 \begin{equation}\label{Y8}
\begin{array}
{ll}
\Sigma_{_{\theta}}:&
\left\lbrace \begin{array}
{l}
\mathbf{z}(t)=f_{_{\theta}}(\mathbf{c}(t-l),\cdots,\mathbf{c}( t-l')), \\
\mathbf{c}(t+d)=enc_{_{\theta}}(\mathbf{z}(t),\uk),
\end{array}
\right.
\\ \ \\
\Sigma'_{_{\theta}}:&
\left\lbrace \begin{array}
{l}
\widehat{\mathbf{z}}(t+d)=f_{_{\theta}}(\mathbf{c}(t-l),\cdots,\mathbf{c}( t-l')), \\
\widehat{\mathbf{p}}(t+d)=dec_{_{\theta}}( \widehat{\mathbf{z}}(t+d),\mathbf{c}(t+d)),
\end{array}
\right.
\end{array}
\end{equation}
where $f_{_{\theta}}$ is the function that generate the key-streams $\{\mathbf{z}(t)\}$ and $\widehat{\mathbf{z}}$. The ciphertext $\mathbf{c}(t+d)$ is worked out through an encryption function $enc_{_{\theta}}$ and the
decryption is performed through a function $dec_{_{\theta}}$ depending on the ciphertext $\mathbf{c}(t+d)$, in which $l' \geq l \geq 1$ and $d\geq 0$ is the system delay.
Note that if $dec_{_{\theta}}=enc_{_{\theta}}^{-1}$ and for some finite time synchronization delay $t_{_{s}} > d$, we have
$$\forall \ t \geq t_{_{s}}, \ \ \widehat{\mathbf{z}}(t+d)=\mathbf{z}(t),$$
then
$$\forall \ t \geq t_{_{s}}\geq 1 \ \ \widehat{\mathbf{p}}(t+d)=dec_{_{\theta}}( \widehat{\mathbf{z}}(t+d),\mathbf{c}(t+d))=\uk.$$
In this setting, $t_{_{c}} \isdef |l'-l+1|$ shows the amount of memory one needs to save the necessary ciphertexts from the past.
Hence, the self-synchronizing stream cipher must be initialized by loading $t_{_{c}}$ dummy ciphertext symbols at the beginning as part of the {\it initial-condition vector} ICV.
Note that to ensure correct deciphering of the plaintext, this data must be shared between the transmitter and the receiver but does not necessarily need to be kept secret.

 Let us emphasize that in Equation~\ref{Y8} and in all other equations that describe stream ciphers in the sequel, $t \in \mathbb{Z}$ denotes the
time, $\mathbf{c}(t)$ stands for the cipher stream (in blocks) that is transmitted on the communication channel between the transmitter $\Sigma_{_{\theta}}$ and the receiver $\Sigma'_{_{\theta}}$. Similarly, $\mathbf{p}(t)$ stands for the
plaintext stream (in blocks) and $\widehat{\mathbf{p}}(t)$ stands for the stream (in blocks) that is extracted by the receiver. Hence, in this setting, for instance,
equations of $\Sigma_{_{\theta}}$ imply that $\mathbf{c}(t+d)$ is generated
at time $t$ using the data $\mathbf{z}(t)$ and $\uk$, but is transmitted on the channel $d$ clocks later, at time $t+d$. On the other hand, the equations of $\Sigma'_{_{\theta}}$ show that $\widehat{\mathbf{p}}(t+d)$ is computed
at time $t+d$ using $\widehat{\mathbf{z}}(t+d)$ and $\mathbf{c}(t+d)$ that is received through the communication channel with a delay $d$. These facts imply that, generically, in a self-synchronized setting, and when both systems are synchronized, we have $\widehat{\mathbf{z}}(t+d)=\mathbf{z}(t)$ and $\widehat{\mathbf{p}}(t+d)=\mathbf{p}(t)$ for $t\geqslant 1$. In what follows we always assume that the delayed signals $\mathbf{z}(t)$ and $\widehat{\mathbf{p}}(t)$ are set to $\bot$ when they are not defined.

 A study of stream ciphers through a control theoretic approach has been pioneered
by G.~Mill\'{e}rioux {\it et.al.} (see \cite{PM13} and references therein) from which we know that the setup introduced in
Equations~\ref{Y8} is equivalent to the following setup under {\it flatness} condition (see Lemma~\ref{lem:flatness}),

 \begin{equation}\label{eq:ss}
\begin{array}{ll}
\Gamma_{_{\theta}}:&
\left\lbrace \begin{array}{l}
\mathbf{s}(t+1)=\phi_{_{\theta}}(\xk,\uk), \\
\mathbf{c}(t+d)=\varepsilon_{_{\theta}}(\xk,\uk),
\end{array}
\right.
\\ \ \\
\Gamma'_{_{\theta}}:&
\left\lbrace \begin{array}{l}
\widehat{\mathbf{s}}(t+1)=\beta_{_{\theta}}(\xh,\mathbf{c}(t)), \\
\widehat{\mathbf{p}}(t)=\delta_{_{\theta}}(\xh,\mathbf{c}(t)),
\end{array}
\right.
\end{array}
\end{equation}
where $\phi_{_{\theta}}$ and $\beta_{_{\theta}}$ are the functions that generate $\xk$ and $\xh$ and also $\varepsilon_{_{\theta}}$, $\delta_{_{\theta}}$  are the encryption and the decryption functions.
 Such a system is said to be {\it finite-time self-synchronized} if
there exists some finite time $t_{_{s}}$
such that for any initial condition
$$ \forall t \geq t_{_{s}}, \ \ \widehat{\mathbf{s}}(t+d)=\xk.$$
Note that after synchronization, i.e. for all $t \geq t_{_{s}}$ we have
$$\hp(t+d)=\plainvec(t).$$

 Although systems~\ref{Y8} and \ref{eq:ss} are essentially equivalent under flatness condition (see Lemma~\ref{lem:flatness}), but as our first crucial observation, we will see that there is a subtlety about the mapping between initial conditions. In other words, the definition of {\it initialization vector}, IV, as it is usually referred to in a cryptographic context is based on initializing System~\ref{Y8}, while for System~\ref{eq:ss}
one may choose parts of the initial conditions randomly and still recover
the plaintex correctly using self-synchronization. This important
technique of mapping self-synchronization to a random initial condition
is fundamental in our security analysis and shows the importance of the
approach proposed by G.~Mill\'{e}rioux {\it et.al.} for the design of
self-synchronized stream ciphers which is based on the design of flat
systems of type~\ref{eq:ss}.

 Since we are going to analyze our proposed systems in a provable security model, we elaborate on emphasizing on the computational details of our model as follows.

 A self-synchronized dynamical cryptosystem scheme $$\Gamma_{_{d,t_{_{s}}}}=(\Gen,\Init_{_{\Enc}},
\Init_{_{\Dec}},\Enc_{_{\theta}},\Dec_{_{\theta}})$$ is defined to be a set of efficient algorithms $\Gen$, $\Init_{_{\Enc}}$, $\Init_{_{\Dec}}$, $\Enc_{_{\theta}} $ and $\Dec_{_{\theta}} $ described as follows.
\begin{itemize}
	\item[-] $k$: is the security parameter,
	\item[-] $\Gen$: is a probabilistic algorithm that on input $1^k$
	outputs the secret key $\key$.
	\item[-] $\theta$ is a vector containing system parameters determined
	 by $\key$ and the public information.
	\item[-] $\Init_{_{\Enc}}$: is a randomized algorithm that on input $\key$ outputs an initial condition vector $\matr{ICV}$ and the state
	initialization vector $\state(0)$.
\item[-] $\Init_{_{\Dec}}$: is a randomized algorithm that on inputs $\key$ and the vector $\matr{IV}$ (received through the communication channel) outputs an initial condition vector $\matr{ICV}$ and the state
	initialization vector $\hstate(d)$.
	\item[-] $\Enc_{_{\theta}}$: is a randomized algorithm that using initializations, a message block $\uk$ and a state vector $\xk$, outputs a ciphertext block $\mathbf{c}(t)$ and an	updated state vector $\kx$ as follows,
	
\begin{equation}\label{eq:sssen}
\Enc_{_{\theta}}(t):
\left\lbrace \begin{array}{l}
(\state(0),\matr{ICV}) \leftarrow \Init_{_{\Enc}},\\
If \ t=0 \ then \\
\ \ \mathbf{s}(1)=\phi'_{_{\theta}}(\state(0),\matr{ICV}), \\
\ \ \mathbf{c}(d)=\varepsilon'_{_{\theta}}(\state(0),\matr{ICV}),\\
\ \ {\rm Output:} \ \bot\\
Else\\
\ \ \mathbf{s}(t+1)=\phi_{_{\theta}}(\xk,\uk), \\
\ \ \mathbf{c}(t+d)=\varepsilon_{_{\theta}}(\xk,\uk),\\
\ \ {\rm Output:} \ \mathbf{c}(t),
\end{array}
\right.
\end{equation}
where $\phi'_{_{\theta}}$ and $\phi_{_{\theta}}$ are the functions that generate $\mathbf{s}(1)$ and $\mathbf{s}(t+1)$ and also $\varepsilon'_{_{\theta}}$ and $\varepsilon_{_{\theta}}$ are encryption functions.
 	\item[]$\Dec_{_{\theta}}$: is a randomized algorithm that
	using initializations, a ciphertext $\mathbf{c}(t) $ and a state vector $\xh$, outputs a (recovered plaintext) block $\uh$ and an	updated state vector $\hx$ as follows,

 \begin{equation}\label{eq:sssde}
\Dec_{_{\theta}}(t):
\left\lbrace \begin{array}{l}
(\hstate(d),\matr{ICV}) \leftarrow \Init_{_{\Dec}},\\
If \ t < d\ then \\
\ \ {\rm Output:} \ Ack \\
If \ t = d\ then \\
\ \ \widehat{\mathbf{s}}(t+1)=\beta'_{_{\theta}}(\xh,\mathbf{c}(d),\matr{ICV}), \\
\ \ {\rm Output:} \ Ack \\
Else\\
\ \ \widehat{\mathbf{s}}(t+1)=\beta_{_{\theta}}(\xh,\mathbf{c}(t)), \\
\ \ \widehat{\mathbf{p}}(t)=\delta_{_{\theta}}(\xh,\mathbf{c}(t)),\\
\ \ {\rm Output:} \ \widehat{\mathbf{p}}(t).
\end{array}
\right.
\end{equation}
Where $\beta'_{_{\theta}}$ and $\beta_{_{\theta}}$ are the functions that generate $\widehat{\mathbf{s}}(1)$ and $\widehat{\mathbf{s}}(t+1)$ and also $\delta_{_{\theta}}$ is decryption functions.
 \end{itemize}

 Here again $d$ is the system delay and $t_{_{s}} > d$ is the synchronization delay, respectively. We usually assume that $\delta_{_{\theta}}=\varepsilon_{_{\theta}}^{-1}$ and that the first $(t_{_{s}}-1)$ plaintexts are selected randomly to ensure safe decryption of the plaintext. As a natural condition we always assume that $\varepsilon_{_{\theta}}$ resists collisions in the sense that for any pair of states $(\mathbf{s},\mathbf{s}')$, the probability of  not having a collision like
$$\varepsilon_{_{\theta}}(\mathbf{s},\mathbf{p})=\varepsilon_{_{\theta}}(\mathbf{s}',\mathbf{p}')$$
for some $\mathbf{p} \not = \mathbf{p}'$ is greater than a noticeable function of the security parameter.

 Also, note that the initial condition vector
is generated using the secret key $\key$, however the whole vector ICV
is not necessarily transmitted to the receiver on the transmission channel along with the ciphertext. For this, in what follows, IV always
stands for part of ICV that is transmitted (plainly) along with the ciphertext to the receiver. Note that in this setting the rest of ICV is either determined by the secret key $\key$ or is chosen at random (of course the IV itself can also depend on the $\key$ or be chosen at random but the difference falls in the fact that IV is transmitted through the channel to the receiver).

 The parametric function $\phi_{_{\theta}}$ is called the
{\it update} of the encryption process $\transmitter$, and the
parametric function $\beta_{_{\theta}}$ is called the {\it update} of the decryption process $\receptor$.

 The following lemma essentially reflects the main contribution of
G.~Mill\'{e}rioux {\it et.al.} (see \cite{MD09} and references therein)
expressed in our setting.

 \begin{lem}\label{lem:flatness}
Consider the communication system $\Gamma_{_{d,t_{_{s}}}}$. Assuming that for all $ t \geq t_{_{s}}$ we have $\widehat{\mathbf{s}}(t+d)=\mathbf{s}(t)$, then,
\begin{itemize}
\item[{\rm i)}]If $\delta_{_{\theta}}=\varepsilon_{_{\theta}}^{-1}$
then for all $ t \geq t_{_{s}}$ we have $ \widehat{\mathbf{p}}(t+d)=\plainvec(t)$.
\item[{\rm ii)}] If $\Enc_{_{\theta}}$ is flat, i.e.
if there exist constants $l,l'$ and functions $F_{_{\theta}}$ and $G_{_{\theta}}$ such that
\begin{equation}\label{eq:flat}
\Gamma_{_{d,t_{_{s}}}}:
\left\lbrace\begin{array}{l}
\xk=F_{_{\theta}}(\ciphervec(t-l),\cdots,\ciphervec(t-l')),\\
\uk=G_{_{\theta}}(\ciphervec(t-l),\cdots,\ciphervec(t-l')),
\end{array}
\right.
\end{equation}

 then $\Gamma_{_{d,t_{_{s}}}}$ is equivalent to the following system which is in the form of System~{\rm \ref{Y8}} $($note that the converse is trivial$)$,

 \begin{equation}\label{eq:sssen1}
\Sigma_{_{\theta}}(t):
\left\lbrace \begin{array}{l}
Initialize \ (\ciphervec(t_{_{s}}-l),\cdots,\ciphervec(t_{_{s}}-l')), \\
If \ t\geq t_{_{s}} \ then \\
\ \ \mathbf{z}(t)=H_{_{\theta}}(\ciphervec(t-l),\cdots,\ciphervec(t-l')), \\
\ \ \mathbf{c}(t+d)=\varepsilon_{_{\theta}}(\mathbf{z}(t),\uk).\\
%\ \ {\rm Output:} \ \mathbf{c}(t).
\end{array}
\right.
\end{equation}
\ \\
\begin{equation}\label{eq:sssde1}
\Sigma'_{_{\theta}}(t):
\left\lbrace \begin{array}{l}
Receive \ (\ciphervec(t_{_{s}}-l),\cdots,\ciphervec(t_{_{s}}-l')),\\
If \ t\geq t_{_{s}} \ then \\
\ \ \widehat{\mathbf{z}}(t+d)=H_{_{\theta}}(\ciphervec(t-l),\cdots,\ciphervec(t-l')),\\
\ \ \widehat{\mathbf{p}}(t+d)=\delta_{_{\theta}}(\widehat{\mathbf{z}}(t+d),\mathbf{c}(t+d)).\\
%\ \ {\rm Output:} \ \widehat{\mathbf{p}}(t+d).
\end{array}
\right.
\end{equation}

 \end{itemize}
\end{lem}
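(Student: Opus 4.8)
The plan is to handle the two parts in turn, treating part~(i) as a direct substitution and part~(ii) as the construction of the keystream map $H_{\theta}$ out of the flatness data. For part~(i), I would simply chain the defining relations of System~\ref{eq:ss}. Evaluating the decryption equation $\widehat{\mathbf{p}}(t)=\delta_{\theta}(\widehat{\mathbf{s}}(t),\mathbf{c}(t))$ at time $t+d$ gives $\widehat{\mathbf{p}}(t+d)=\delta_{\theta}(\widehat{\mathbf{s}}(t+d),\mathbf{c}(t+d))$. Invoking the synchronization hypothesis $\widehat{\mathbf{s}}(t+d)=\mathbf{s}(t)$ and the encryption equation $\mathbf{c}(t+d)=\varepsilon_{\theta}(\mathbf{s}(t),\mathbf{p}(t))$, this becomes $\widehat{\mathbf{p}}(t+d)=\delta_{\theta}\bigl(\mathbf{s}(t),\varepsilon_{\theta}(\mathbf{s}(t),\mathbf{p}(t))\bigr)$. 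Since $\delta_{\theta}=\varepsilon_{\theta}^{-1}$ inverts $\varepsilon_{\theta}$ state-wise, the right-hand side collapses to $\mathbf{p}(t)$, which is the claim for every $t\geq t_{s}$.

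For part~(ii) the key observation is that flatness renders the internal state redundant: the hypothesis $\mathbf{s}(t)=F_{\theta}(\mathbf{c}(t-l),\ldots,\mathbf{c}(t-l'))$ of Equation~\ref{eq:flat} exhibits the state as a deterministic function of the past ciphertext window. I would therefore \emph{set} $H_{\theta}:=F_{\theta}$, so that the keystream $\mathbf{z}(t)=H_{\theta}(\mathbf{c}(t-l),\ldots,\mathbf{c}(t-l'))$ coincides with $\mathbf{s}(t)$ identically for $t\geq t_{s}$. Substituting this into the encryption line of System~\ref{eq:ss} yields $\mathbf{c}(t+d)=\varepsilon_{\theta}(\mathbf{z}(t),\mathbf{p}(t))$, which is exactly the second clause of System~\ref{eq:sssen1}. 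On the receiver side, once synchronized both parties read the identical ciphertext history, hence $\widehat{\mathbf{z}}(t+d)=H_{\theta}(\mathbf{c}(t-l),\ldots,\mathbf{c}(t-l'))=\widehat{\mathbf{s}}(t+d)$, and feeding this into $\delta_{\theta}$ reproduces the decryption clause of System~\ref{eq:sssde1}; the leading \emph{Initialize} / \emph{Receive} clauses merely record that the past-ciphertext window must be loaded before the recursion starts at $t=t_{s}$.

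Two remaining checks complete the argument. First, I would verify consistency of the state recursion: the dynamics give $\mathbf{s}(t+1)=\phi_{\theta}(\mathbf{s}(t),\mathbf{p}(t))$, while flatness independently asserts $\mathbf{s}(t+1)=F_{\theta}(\mathbf{c}(t+1-l),\ldots,\mathbf{c}(t+1-l'))$; the compatibility of these two expressions is precisely the content of the flatness condition and need not be re-derived. Second, the converse asserted as trivial is immediate: any system already in the form of System~\ref{Y8} is cast as a flat $\Gamma_{\theta}$ by declaring the state to be the sliding window of past ciphertext blocks and letting $F_{\theta},G_{\theta}$ be the keystream map together with the induced plaintext recovery, so that Equation~\ref{eq:flat} holds by construction.

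The main obstacle I anticipate is not mathematical but notational: keeping the delay $d$ and the synchronization threshold $t_{s}$ aligned consistently across the transmitter and receiver indices, so that $\widehat{\mathbf{z}}(t+d)$ is matched against $\mathbf{z}(t)$ and both are genuinely evaluated on the same ciphertext window $\mathbf{c}(t-l),\ldots,\mathbf{c}(t-l')$. Once the indexing is pinned down, the algebraic core of each part reduces to a single substitution.
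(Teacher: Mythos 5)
The paper offers no proof of Lemma~\ref{lem:flatness} at all: the lemma is presented as the contribution of G.~Mill\'{e}rioux et al.\ ``expressed in our setting,'' with only the citation \cite{MD09} behind it, so there is no in-paper argument to compare yours against. Judged on its own, your proof is correct and is surely the argument the authors intend. Part~(i) is the same one-line substitution the paper already sketches for System~\ref{Y8} (there phrased via $dec_{_{\theta}}=enc_{_{\theta}}^{-1}$ and keystream synchronization $\widehat{\mathbf{z}}(t+d)=\mathbf{z}(t)$), transported to System~\ref{eq:ss} via the hypothesis $\widehat{\mathbf{s}}(t+d)=\mathbf{s}(t)$; your reading of $\delta_{_{\theta}}=\varepsilon_{_{\theta}}^{-1}$ as an inverse for each fixed state is the intended one. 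In part~(ii), taking $H_{_{\theta}}:=F_{_{\theta}}$, so that the keystream is the state itself, is the natural construction, and you are right that $G_{_{\theta}}$ plays no role (plaintext recovery on the receiver side goes through $\delta_{_{\theta}}$ and part~(i), not through $G_{_{\theta}}$). Two bookkeeping points you could make explicit: first, the asserted equivalence is really a finite induction on $t\geq t_{_{s}}$ showing that the ciphertext streams of $\Gamma_{_{d,t_{_{s}}}}$ and of System~\ref{eq:sssen1} coincide once both are loaded with the same initial window $(\mathbf{c}(t_{_{s}}-l),\cdots,\mathbf{c}(t_{_{s}}-l'))$ --- your substitution tacitly assumes the two histories agree, which is exactly what that induction delivers; second, in your converse the recovery map $G_{_{\theta}}$ must be allowed a window reaching the ciphertext $\mathbf{c}(t+d)$ (i.e.\ $l$ taken negative), since in System~\ref{Y8} the plaintext at time $t$ is determined only by the past window together with $\mathbf{c}(t+d)$; Equation~\ref{eq:flat} permits this because it asks merely for ``constants $l,l'$.'' Neither point is a gap, just slack that the lemma's own statement leaves open.
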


 Nowadays, self-synchronized stream ciphers (at least the classic ones) are much more related to block ciphers than to synchronous stream ciphers. In this regard, it is well-known that several modes of operation for block ciphers have been proposed
among which a couple of them (as $\matr{CBC}$ and $\matr{CFB}$ modes) are self-synchronized (e.g. see \cite{oS80}).

 \begin{exm}{{\bf CBC and CFB modes}
		
		Given a block cipher with encryption $E_{\key}$ and decryption $D_{\key}$,	we may describe the $\matr{CBC}$ and $\matr{CFB}$ modes as follows.
	Note that in both cases $\Init_{_{\Enc}}$ and $\Init_{_{\Dec}}$ are randomized algorithms that on input $1^k$,
	output a random initial value $\matr{IV}$ (here $t_{_{s}}=t_{_{c}}=1$ and $d=0$).
	\newpage
	The $\matr{CBC}$ mode:
	\begin{itemize}
			\item[]
	{\scriptsize
		\begin{equation}\label{esssc}
			\begin{array}{ll}
				\Enc_{_{\theta}}(t):
				\left\lbrace \begin{array}{l}				
	 Initialize \ 	\matr{ IV},\\
	 	 \state(0) =\matr{ IV},\\
					If \ t=0 \ then \\
					\ \ \state(1) = \state(0),\\
					\ \ {\rm Output:} \bot \\
					Else \\
					\ \ \mathbf{c}(t) = E_{\key}(\uk \oplus\xk),\\
					\ \ \kx = \mathbf{c}(t), \\
					\ \ {\rm Output:} \mathbf{c}(t). \\
				\end{array}
				\right.
				&
				\Dec_{_{\theta}}(t):
				\left\lbrace \begin{array}{l}		
			 	Receive \ \matr{ IV},\\
			 	\hstate(0) =\matr{ IV},\\
					If \ t=0 \ then \\
					\ \ \hstate(1) =\hstate(0), \\
					\ \ {\rm Output:} Ack \\
					Else\\
					\ \ \hx = \mathbf{c}(t), \\
					\ \ \uh=D_{\key}(\mathbf{c}(t))\oplus\xh,\\
					\ \ {\rm Output:} \uh.\\
				\end{array}
				\right.
			\end{array}
		\end{equation}
	}
	\end{itemize}
	The $\matr{CFB}$ mode:
	\begin{itemize}
			\item[]
	{\scriptsize
		\begin{equation}\label{esssc1}
			\begin{array}{ll}
				\Enc_{_{\theta}}(t):
				\left\lbrace \begin{array}{l}
				{\rm Initialize }\ \matr{ IV},\\
				\state(0) =\matr{ IV},\\
					If \ t=0 \ then \\
					\ \ \state(1) = E_{\key}(\matr{ IV}),\\
					\ \ {\rm Output:} \bot \\
					Else\\
					\ \ \kx = E_{\key}(\mathbf{c}(t) ),\\
					\ \ \mathbf{c}(t) = \uk\oplus \xk,\\
							\ \ {\rm Output:}  \mathbf{c}(t).\\
				\end{array}
				\right.
				&
				\Dec_{_{\theta}}(t):
				\left\lbrace \begin{array}{l}
				{\rm Receive }\ \matr{ IV},\\
				\hstate(0) =\matr{ IV},\\
					If \ t=0 \ then \\
					\ \ \hstate(1) = E_{\key}(\matr{ IV}),\\
					\ \ {\rm Output:} Ack \\
					Else\\
					\ \ \hx = E_{\key}(\mathbf{c}(t)),\\
					\ \ \uh=\xh\oplus\mathbf{c}(t),\\
						\ \ {\rm Output:}  \uh. \\
				\end{array}
				\right.
			\end{array}
		\end{equation}
	}
	\end{itemize}

}\end{exm}

\section{Security models}\label{sec:Sm}

 In \cite{JMV02} Joux et.al. introduced a new attack to some modes of operation including the $\matr{CBC}$ mode which was the initiating fundamental contribution leading to the {\it blockwise}
models of security for stream ciphers (also see \cite{BKN04} for other motivations). In \cite{FMP03} the authors proposed a simple delay procedure to prevent the proposed attack as follows.

 \begin{exm}{\label{exm:DCBC}
		{\bf DCBC: the delayed $\matr{CBC}$ mode \cite{FMP03}}
	
		\begin{itemize}
			\item[-] Here $\Init_{_{\Enc}}$ and $\Init_{_{\Dec}}$ are randomized algorithms that on input $1^k$,
			output a random initial value $\matr{IV}$ (note that here we have $t_{_{s}}=t_{_{c}}=d=1$).
			\item[-] $\Enc_{_{\theta}}(t)$
			takes as inputs a plaintext block or the special symbol ``stop" for any time.
	{\scriptsize
		\begin{equation}\label{DCBC}
			\begin{array}{ll}
				\Enc_{_{\theta}}(t):
				\left\lbrace \begin{array}{l}
								\rm Initialize \ \matr{ IV},\\
					\state(0) =\matr{ IV},\\
					If \ t=0 \ then\\
					\ \ \mathbf{c}(1) =\state(0),\\
					\ \ \state(1) =\state(0), \\
					\ \ {\rm Output:} \ \bot\\
						 If \ \plainvec(t)=stop \ then \\
					\ \ {\rm Output:}\ \ciphervec(t)\\
					Else \\
					\ \ \mathbf{c}(t+1) = E_{\key}(\state(t) \oplus \plainvec(t)),\\
					\ \ \state(t+1) = E_{\key}(\state(t) \oplus \plainvec(t)),\\
					\ \ {\rm Output:}\ \mathbf{c}(t).\\
				
				\end{array}
				\right.
				&
				\Dec_{_{\theta}}(t):
				\left\lbrace \begin{array}{l}
				{\rm Receive }\ \matr{ IV},\\
				\hstate(1) =\matr{ IV},\\		
				If \ t=1 \ then \\
					\ \ \hstate(2)=\matr{ IV},\\
					\ \ \hp(1)=\ {\rm Ack},\\
					\ \ {\rm Output:} \hp(1)\\
					Else \\
					\ \ \hstate(t+1) = \mathbf{c}(t),\\
					\ \ \hp(t) = D_{\key}(\mathbf{c}(t)) \oplus \hstate(t),\\
					\ \ {\rm Output:}\ \hp(t).\\
				\end{array}
				\right.
			\end{array}
		\end{equation}
	}
%	where $\hstate(t+1) =\state(t)$ for any $t>1$.
	\end{itemize}
			
}\end{exm}
	
Although the proposed $\matr{DCBC}$ mode is secure in the blockwise security model (the details will follow) let us consider the following setup showing that the delay procedure also has some drawbacks.

 Assume that the adversary has oracle access to the decryption update (not necessarily to the whole decryption procedure) and uses the decryption
update to do the encyption in the following setting.
	{\scriptsize
\begin{equation}\label{eq:DCBCEn'}
\Enc'_{_{\theta}}(t):
\left\lbrace \begin{array}{l}
				{\rm Initialize }\ \matr{ IV},\\
	\ \hstate(1) =\matr{ IV},\\		
If \ t=1\ then \\
\ \	\mathbf{c}(1)=\matr{IV},\\
\ \ \hstate(2)=\matr{IV}, \\
\ \ {\rm Output:} \ \mathbf{c}(1).\\
If\ \plainvec(t-1)=stop \ then \\
\ \ {\rm Output:}\ \bot\\
Else\\	
\ \	\mathbf{c}(t)=E_{\key}(\xh\oplus \plainvec(t-1)),\\
\ \ \hstate(t+1)=\mathbf{c}(t), \\

 \ \ {\rm Output:} \ \mathbf{c}(t).\\
 
\end{array}
\right.
\end{equation}
}

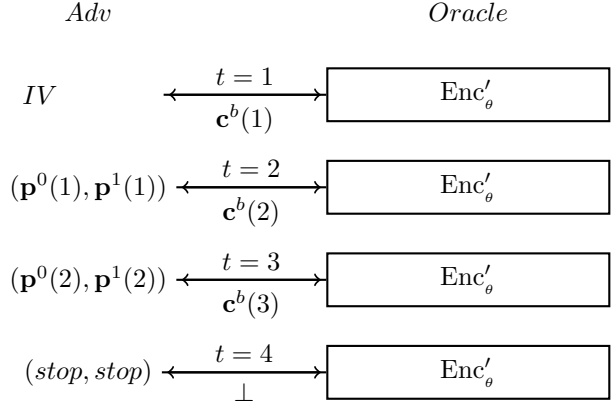
\begin{figure}[t]
	\centering
	\begin{tikzpicture} [
	auto,
	decision/.style = { diamond, draw=black, thick,
		text width=5em, text badly centered,
		inner sep=1pt },
	block/.style = { rectangle, draw=black, thick,
		text width=10em, text centered,
		minimum height=2em },
	line/.style = { draw, thick, ->, shorten >=2pt },
	]
	
	% Define nodes in a matrix
	\matrix [column sep=20mm, row sep=5mm]
	{\node [text centered] (x0) {$Adv $};&\node [text centered] (p0) {$Oracle$};\\
		\node [text centered] (x1) {$IV~~~~~~~ ~~~~$};&\node [block] (p1) {$\Enc'_{_{\theta}}$};\\
		\node [text centered] (x2) {$(\plainvec^0(1),\plainvec^1(1))$  };&\node [block] (p2) {$\Enc'_{_{\theta}}$};\\
		\node [text centered] (x3) {$(\plainvec^0(2),\plainvec^1(2))$ };&\node [block] (p3) {$\Enc'_{_{\theta}}$}; \\
		\node [text centered] (x4) {$(stop,stop)$ };&\node [block] (p4) {$\Enc'_{_{\theta}}$};\\
	};
	
	%
	% legend for subprocedures
	\begin{scope} [every path/.style=line]
	\path (x1) -- node [] {$t=1$} (p1);
	\path (p1) -- node [] {$\mathbf{c}^b(1)$} (x1);
	\path (x2) -- node [] {$t=2$} (p2);
	\path (p2) -- node [] {$\mathbf{c}^b(2)$} (x2);
	\path (x3) -- node [] {$t=3$} (p3);
	\path (p3) -- node [] {$\mathbf{c}^b(3)$} (x3);
	\path (x4) -- node [] {$t=4$} (p4);
	\path (p4) -- node [] {$\bot$} (x4);

	\end{scope}
	\end{tikzpicture}
	\caption{ The interaction between the adversary and the oracle $\Enc'_{_{\theta}}$.
	}\label{fig:DCBC}
\end{figure}

 Note that in a synchronized regime the response of $\Enc'_{_{\theta}}$
is the same as that of $\Enc_{_{\theta}}$.
The following attack using the encryption oracle
$\Enc'_{_{\theta}}$ succeeds in a distinguishability test as it is described below.
This kind of attack is our basic motivation for our propose security model in Section~\ref{sec:LORBACPA}.
\begin{itemize}
\item[-]The adversary uniformly chooses two messages $\left\lbrace\plainvec^0(t)\right\rbrace, \left\lbrace\plainvec^1(t)\right\rbrace,$ each consisting of two blocks, such that
$\plainvec^0(2)\neq \plainvec^1(2) $, and queries its oracle on this pair.
\item[-] The encryption of $\left\lbrace\plainvec^b(t)\right\rbrace$ as $(\ciphervec^{b}(1),\ciphervec^{b}(2),\ciphervec^{b}(3),\matr{IV})$ is
given to the adversary. (The goal of the adversary will be to guess the value of $b$.)
\item[-] The adversary queries a pair $(\tilde{\plainvec}^0(1),\tilde{\plainvec}^1(1))$
for plaintexts $(\left\lbrace\tilde{\plainvec}^0(t)\right\rbrace,\left\lbrace\tilde{\plainvec}^1(t)\right\rbrace)$ whose first blocks, $\tilde{\plainvec}^0(1)$ and $\tilde{\plainvec}^1(1)$ are chosen uniformly at random.
\item[-] The adversary receives the encryption of $\tilde{\plainvec}^b(1)$ as $(\tilde{\ciphervec}^b(1),\tilde{\ciphervec}^b(2))$ and then sends the second query as $(\tilde{\plainvec}^0(2),\tilde{\plainvec}^1(2))$
for which
$\tilde{\plainvec}^0(2)=\tilde{\plainvec}^1(2)=\plainvec^0(2) \oplus \ciphervec^{b}(2) \oplus \tilde{\ciphervec}^{b}(2)$.
\item[-]The adversary receives $\tilde{\ciphervec}^b(3)$.
\item[-] If $\tilde{\ciphervec}^b(3)=\ciphervec^{b}(3)$ the adversary guesses $b = 0$, otherwise it sets $b = 1$.
\end{itemize}

 Note that if $b = 0$ then,
\[\begin{array}{ll}
\tilde{\ciphervec}^b(3)=E_{\key}({\hstate}^b(3) \oplus \tilde{\plainvec}^b(2))=
E_{\key}(\tilde{\ciphervec}^b(2) \oplus \plainvec^0(2)\oplus \ciphervec^{b}(2)\oplus \tilde{\ciphervec}^b(2))=\ciphervec^{b}(3),
\end{array}
\]
while if $b = 1$ with a chance of at least $\frac{1}{2}$ we have,
\[\begin{array}{ll}
\tilde{\ciphervec}^b(3)=E_{\key}({\hstate}^b(3) \oplus \tilde{\plainvec}^b(2))=
E_{\key}(\tilde{\ciphervec}^b(2) \oplus \plainvec^0(2)\oplus \ciphervec^{b}(2)\oplus \tilde{\ciphervec}^{b}(2)) \neq \ciphervec^{b}(3).
\end{array}
\]

 It is instructive to note that a similar attack is not applicable using the $\Enc_{_{\theta}}$ oracle. Also, this example shows that although applying a delay procedure may
prevent attacks in the blockwise security model but at the same time it may still make the whole system vulnerable to the attacks using the decryption update (which is quite feasible in practice).

 \subsection{The LORBACPA$^+$ security model}\label{sec:LORBACPA}

 Our main objective in this section is to formalize a security model that
can be used for both random and nonrandom (nonce) initialization scenarios within the blockwise model using decryption update. It will become clear that this security model, hereafter called LORBACPA$^+$, is stronger than the blockwise CPA model,
where our ultimate aim in the next sections will be to propose a LORBACPA$^+$ secure self-synchronized stream cipher.

 To begin, we have to formalize the oracles we are going to use in our security model.

 \begin{defin}{{\bf LORBA encryption oracles}\ \\
		
		A {\it left-or-right blockwise adaptive encryption oracle} $E(-,b)$ is an oracle
		applying the function $\Enc_{_{\theta}}$ in the following way (depending on the
		specified parameters).
		\begin{itemize}
			\item Each communication with the oracle is either an {\it initialization}
			request of a new {\it session} or a request for {\it encryption}.
			\item For each initialization request we have the following cases:
			 \begin{itemize}
			 	\item If the oracle is a randomly initialized oracle, denoted by $E({\rm \$IV},b)$, then the oracle
			 	 starts a new session using a random initializing data and sends the number of the session along with the (randomly fixed) $\matr{IV}$ to the main program.
			 	\item If the oracle is of chosen $\matr{IV}$ type, denoted by $E(\matr{IV},b)$, then the initialization request contains an $\matr{IV}$ chosen by the program (i.e. the adversary)
			 	 and the oracle uses this $\matr{IV}$ along with the key and possibly other randomly chosen initializing parameters (if any) and returns the session number to the main program.
			 	\item The oracle uses $\Init_{_{\Enc}}$ to determine $\state(0)$.
			 	 \end{itemize}
			 	 \item Each encryption request is of the form $(\plainvec^0,\plainvec^1,i)$ containing two plaintext-blocks
			 $\plainvec^0$ and $\plainvec^1$ with same length and session number $i$.
		 The oracle is capable of saving the history of each session
		 (in particular the state vectors $\state(t)$), therefore, upon
			 receiving such a request the oracle returns the encryption of the block $\plainvec^b$ using the history of the session $i$ along with the $\matr{IV}$ by applying
			
\begin{equation}\label{eq:sssen3}
\Enc_{_{\theta}}(t):
\left\lbrace \begin{array}{l}
{\rm Use \ history \ of }\ (\state(0), \matr{IV}, \matr{ICV}),\\
If \ t=0 \ then \\
\ \ \mathbf{s}(1)=\phi'_{_{\theta}}(\state(0),\matr{ICV}), \\
\ \ \mathbf{c}(d)=\varepsilon'_{_{\theta}}(\state(0),\matr{ICV}),\\
\ \ {\rm Output:} \ \mathbf{c}(0)\\
Else\\
\ \ \mathbf{s}(t+1)=\phi_{_{\theta}}(\xk,\plainvec^b(t)), \\
\ \ \mathbf{c}(t+d)=\varepsilon_{_{\theta}}(\xk,\plainvec^b(t)),\\
\ \ {\rm Output:} \ \mathbf{c}(t).
\end{array}
\right.
\end{equation}
		
Note that the oracle is capable of answering different queries
related to different sessions intermittently.
			
\end{itemize}
		
}\end{defin}

 \begin{defin}{{\bf Synchronized LORBA encryption oracles}\label{LORBA SE}\ \\
		
		A {\it left-or-right blockwise adaptive synchronized encryption oracle} $SE(-,b)$ is an oracle applying the update of the decryption process in the following way (depending on the specified parameters).
		\begin{itemize}
			\item The function of the oracle is the same as that of a LORBA encryption oracle
			as far as the initialization of the sessions are concerned but uses $\Init_{_{\Dec}}$ to determine $\hstate(d)$.
			\item Each encryption request is of the form $(\plainvec^0,\plainvec^1,i)$ containing two plaintext-blocks
			$\plainvec^0$ and $\plainvec^1$ along with a session number $i$.
		
		 \item The oracle
			is capable of saving the history of each session, therefore, upon
			receiving such a request the oracle returns the encryption of the block $\plainvec^b$ using the history of the session $s$ along with the $\matr{IV}$ by applying the update of the decryption procedure as
	
\begin{equation}\label{eq:sssen?'}
\Enc'_{_{\theta}}(t):
\left\lbrace \begin{array}{l}
Use \ history \ of \ (\hstate(d),\matr{IV},\matr{ICV}),\\
If \ t=d\ then \\
\ \	\mathbf{c}(t)=\varepsilon'_{_{\theta}}(\xh,\matr{ICV}),\\
\ \ \hstate(t+1)=\beta'_{_{\theta}}(\xh,\matr{ICV}), \\
\ \ {\rm Output:} \ \mathbf{c}(d)\\
If\ \plainvec^b(t-d)=stop \ then \\
\ \ {\rm Output:}\ \bot\\
Else\\
\ \	\mathbf{c}(t)=\varepsilon_{_{\theta}}(\xh,\plainvec^b(t-d)),\\
\ \ \hstate(t+1)=\beta_{_{\theta}}(\xh,\mathbf{c}(t)), \\

 \ \ {\rm Output:} \ \mathbf{c}(t).
\end{array}
\right.
\end{equation}

 Note that the oracle is capable of answering different queries
related to different sessions intermittently.
	\end{itemize}

 }\end{defin}

%%%%%%%%%%%%%%%%%%%%%%%%%%%%%%%%%%%%%%%%%%%%%%%%%%%%%%%%%%%%%%%%%%%%%%
%%%%%%%%%%%%%%%%%%%%%%%%%%%%%%%%%%%%%%%%%%%%%%%%%%%%%%%%%%%%%%%%%%%%%%	

	Now we focus on the adversary model.
	
\begin{defin}{{\bf The LORBACPA$^+$ model} \label{def:exp}
		
We refer to our security model as ${\rm LORBACPA}(-,-)$ where in what follows we describe the model and its dependence on the two undetermined parameters.
\begin{itemize}
	\item A ${\rm LORBACPA}(-,-)$ adversary ${\mathcal A}$ is a nonuniform probabilistic polynomial time oracle Turing machine.
	\item The adversary uses either a LORBA encryption oracle or
	a synchronized LORBA encryption oracle or both
	of them. The second parameter denotes the oracle type as in ${\rm LORBACPA}(-,E)$ or ${\rm LORBACPA}(-,SE)$ which are referred to as {\it simple oracle} models or ${\rm LORBACPA}(-,(E \& SE))$ which is referred to as a {\it mixed oracle} model.
	\item The first parameter denotes whether the adversary can choose the initial vector
	$\matr{IV}$ or not. In particular, for an ${\rm LORBACPA}(\matr{IV},-)$ oracle the
	initial vector $\matr{IV}$ can be chosen by the adversary for each encryption session and be sent to the oracle, while for a ${\rm LORBACPA}(\$\matr{IV},-)$ oracle the initial vector of each session is chosen at random by the oracle and will be sent to the adversary in response to each one of the queries. It is instructive to
	recall that $\matr{IV}$ is the part of the initialization information that is transmitted on the communication channel.
	In what follows the security model ${\rm LORBACPA}(\matr{IV},(E \& SE))$ is referred to as LORBACPA$^+$.
	\item The adversary may query its oracle(s) about the encryption of different blocks of different messages during its computation. For each one of these messages a session must be initialized by the adversary and after that queries about consecutive blocks may be sent to the oracle. The oracle has the capability to save the history of each
	session so that the adversary may submit queries concerning the encryption of different blocks of different messages (sessions) intermittently\footnote{Sometimes this is referred to as {\it concurrent blockwise encryption} ability.}.
\end{itemize}

 The adversary simulates the following experiment $eval({\mathcal A})$.

 \begin{itemize}
	\item A key $\kappa$ is generated by $\Gen(1^k)$.
	\item A bit $b$ is chosen uniformly at random (unknown to the adversary) and the adversary is given access to its LORBA oracle(s)
	 ($E(-,b)$ and/or $SE(-,b)$) of type $b$.
	\item The adversary may send queries to its oracle concurrently.
	\item The adversary outputs a bit $b'$ as the result.
	\item The output of the experiment is defined to be $1$ (it succeeds) if $b=b'$  and $0$ otherwise.
	If it returns $1$, we say that A succeeds and otherwise it fails.
\end{itemize}

We define the {\it advantage } of an adversary
${\mathcal A}$ attacking a system ${\mathcal S}$ in the ${\rm LORBACPA}(-,-)$
model as
$$
\begin{array}{ll}
\Adv^{{\rm LORBACPA}(-,-)}_{ \mathcal{A},{\mathcal S}}(k)&=2\left|Pr(eval({\mathcal A})
=1)-\frac{1}{2}\right|\\
&=\left|[Pr(eval({\mathcal A})=1|b=1)-Pr(eval({\mathcal A})=1|b=0)]\right|\\
&=\left|[Pr_{_1}(eval({\mathcal A})=1)-Pr_{_0}(eval({\mathcal A})=1)]\right|.
\end{array}
$$
Also, the {\it insecurity} of such a system ${\mathcal S}$ in the ${\rm LORBACPA}(-,-)$ model is defined as
$$\Insec^{{\rm LORBACPA}(-,-)}_{{\mathcal S}}(k)=
\displaystyle{\max_{{\mathcal A}}}\
\Adv^{{\rm LORBACPA}(-,-)}_{ \mathcal{A},{\mathcal S}}.
$$
Clearly, such a system ${\mathcal S}$ is said to be secure in the corresponding ${\rm LORBACPA}$ model if $\Insec^{{\rm LORBACPA}(-,-)}_{{\mathcal S}}(k)$ 
is negligible compared to k e.t.
$$\Insec^{{\rm LORBACPA}(-,-)}_{{\mathcal S}}(k) < \negl(k).$$
Where $\negl(k)$ is negligible function compared to k.

}\end{defin}	

 On the other hand, for sufficiently large $k$, we have,
$$\Insec^{{\rm LORBACPA}(\${\rm IV},E)}_{{\mathcal S}}(k) \leq
\Insec^{{\rm LORBACPA}({\rm IV},E)}_{{\mathcal S}}(k)$$
and
$$\Insec^{{\rm LORBACPA}(\${\rm IV},SE)}_{{\mathcal S}}(k) \leq
\Insec^{{\rm LORBACPA}({\rm IV},SE)}_{{\mathcal S}}(k),$$
indicating that LORBACPA$^+$ is the strongest security model in this setting.

 Next, we are going to consider a fundamental type of attack based on detecting collisions, which will be our basic guideline in Section~\ref{sec:DSPLCDESCRP} for the security analysis. But, before introducing the attack let us first fix some notations and concepts to be used in what follows.
 
Note that an adversary $A$ in a ${\rm LORBACPA}(-,-)$ security model, is essentially a randomized algorithm having interactions with LORBA encryption oracles $E$ or $SE$.
In this setting we use the following notations:

\begin{itemize}
	\item $t$: is the global counter for the clock of the algorithm referring to the global real time.
	\item  $o$: refers to an oracle of type $E$ or $SE$. Note that the algorithm may initiate different sessions and ask queries intermittently and without any loss in generality we may assume that there are at most two oracles, one of type $E$ and the other of type $SE$, where they are capable of initiating and answering queries for different sessions (appropriately keeping the history of each session separately). 
	\item $i$: refers to the session number.
	\item $q(o,i,\tau,\nu)$: refers to the query asked from oracle $o$, in the $i$th session, while this query is the $\tau$th query of the $i$th session, and it is the 
	$\nu$th query asked by the algorithm (counting all queries from the beginning).
	In this setting we may write $(\plainvec^0,\plainvec^1,i)=q(o,i,\tau,\nu)$ to indicate that $(\plainvec^0,\plainvec^1,i)$ is the corresponding query in the $i$th session. 
\end{itemize}
 
 Now, let us define a {\it collision attack} as follows.
 
 \begin{defin}{
  Within a ${\rm LORBACPA}(-,-)$ security model for a self-synchronized system $S$ with delay $d$ and synchronization time $t_{_{s}}$ consider an adversary interacting with its oracle(s) $o$ and $o'$ each of which can be of type $E$ or $SE$, and assume that the adversary has initiated a number of sessions of queries. 
 In this setting,
 a {\it collision  }
 for sessions $i$ and $i'$ (not necessarily distinct) is the event
 $Coll((o,i,\tau),(o',i',\tau'),b)$
 for which the adversary have the following collision for the ciphertexts,
 $$\varepsilon_{_{\theta}}(\state(\tau+d\delta_{o}),\plainvec^b_{i,o}(\tau))=\varepsilon_{_{\theta}}(\state'(\tau'),\tilde{\plainvec}^b_{i',o'}(\tau'))$$
 and 
 $$\varepsilon_{_{\theta}}(\state(\tau+d\delta_{o}),\plainvec^{\bar{b}}_{i,o}(\tau)) \not =\varepsilon_{_{\theta}}(\state'(\tau'),\tilde{\plainvec}^{\bar{b}}_{i',o'}(\tau'))$$
 where $\delta_{_{SE}}=1$, $\delta_{_E}=0$ and we have 
 $$(\plainvec^0_{i,o},\plainvec^1_{i,o},i)=q(o,i,\tau,\nu) \quad {\rm and }  \quad
 (\tilde{\plainvec}^0_{i',o'},\tilde{\plainvec}^1_{i',o'},i)=q(o',i',\tau',\nu').$$ 
 
 A ${\rm LORBACPA}(-,-)$ {\it collision attack} is a ${\rm LORBACPA}(-,-)$ adversary for which the probability  of detecting a collision is non-negligible.
}\end{defin}

Clearly, the main objective of a collision attack is to provide an algorithm (whose output depends on the queries) that can maximize the probability of a collision. For this, a straight forward approach is to provide a sequence of queries in different sessions such that it leads to a collision. To see this, note that the attack provided 
in Section~\ref{sec:Sm} for the DCBC mode is essentially a collision attack using only one oracle of type $SE$.

We should emphasize that the straight forward scenario mentioned above is not the only possible setup for a collision attack where one may think of many different approaches 
to detect collisions. To see one more scenario, consider an adversary who uses one oracle of type $E$ and one other oracle of type $SE$, while the adversary tries to detect $t_{_{s}}\label{key}$ consecutive ciphertexts such that 
$$\forall\ 0 \leq j \leq t_{_{s}}-1 \quad \ciphervec_{E}^{b}((\tau-t_{_{s}})+j)=\ciphervec_{SE}^{b}((\tau'+d-t_{_{s}})+j),$$
in which the flag $b$ refers to the oracle action type (i.e. left or right).
Note that  as a consequence of synchronization we have $\state(\tau) = {\hstate}(\tau'+d)$.

Then choosing $\plainvec_{i,E}^0(\tau)={\plainvec}_{i',SE}^0(\tau')$ for $b=0$ we have,

{\scriptsize	
	\begin{equation}\label{lab:DSPLC2}
	\begin{array}{ll}
	\ciphervec_{E}^{b}(\tau+d)-{\ciphervec}_{SE}^{b}(\tau'+d)=\varepsilon_{_{\theta}}(\state(\tau),\plainvec_{i,E}^0(\tau))-\varepsilon_{_{\theta}}({\hstate}(\tau'+d),{\plainvec}_{i',SE}^0(\tau'))=0.
	\end{array}
	\end{equation}
}	
On the other hand, choosing a random $\plainvec_{i,E}^1(\tau)=\plainvec_{_{!}} \not = 0$ 
for $b = 1$  and setting ${\plainvec}_{i',SE}^1(\tau')=0$ we have,
{\scriptsize
	\begin{equation}\label{lab:DSPLC22}
	\begin{array}{ll}
	\ciphervec_{E}^{b}(\tau+d)-{\ciphervec}_{SE}^{b}(\tau'+d)&=\varepsilon_{_{\theta}}(\state(\tau),\plainvec_{i,E}^1(\tau))-\varepsilon_{_{\theta}}({\hstate}(\tau'+d),{\plainvec}_{i',SE}^1(\tau'))\\
	&=\varepsilon_{_{\theta}}(\state(\tau),\plainvec_{_{!}})-\varepsilon_{_{\theta}}(\state(\tau),{\bf 0}).
	\end{array}
	\end{equation}
}	
Then effectiveness of the algorithm follows from the fact that $\varepsilon_{_{\theta}}$ is collision resistant.

\begin{cor}\label{cor:main}
A self-synchronized stream cipher $S$ for which there exists $\tau_{_{0}}$ and a function $f$ such that
$\state(\tau_{_{0}}) = f(\matr{IV})$ (resp. $\hstate(\tau_{_{0}}) = f(\matr{IV})$), is not ${\rm LORBACPA}(\matr{IV},E)$ (resp. ${\rm LORBACPA}(\matr{IV},SE)$) secure.
\end{cor}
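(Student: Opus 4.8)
The plan is to realize the general collision-attack template set up just above (culminating in Equations~\ref{lab:DSPLC2} and~\ref{lab:DSPLC22}) by \emph{engineering} two encryption sessions whose internal states coincide at the distinguished time $\tau_{_{0}}$. The hypothesis $\state(\tau_{_{0}})=f(\matr{IV})$ says that this state is a deterministic function of the transmitted $\matr{IV}$ alone --- in particular it does not depend on the plaintext history fed before time $\tau_{_{0}}$ nor on the hidden bit $b$. Since we work in the ${\rm LORBACPA}(\matr{IV},E)$ model, the adversary is free to choose, and to reuse, the same $\matr{IV}$ for two distinct sessions. Choosing one fixed value $v$ for both sessions therefore forces $\state_{i}(\tau_{_{0}})=\state_{i'}(\tau_{_{0}})=f(v)$, and crucially the adversary never needs to evaluate $f$: equality of the two states is automatic from equality of the two $\matr{IV}$'s.

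Concretely, I would have the ${\rm LORBACPA}(\matr{IV},E)$ adversary $\mathcal{A}$ proceed as follows. First initialize two sessions $i$ and $i'$ of the oracle $E$, both with the single chosen value $\matr{IV}=v$. Next advance each session with arbitrary (say matched) plaintext queries until its local clock reaches $\tau_{_{0}}$; by the hypothesis both states are now equal to $f(v)$. Then submit the query $(\plainvec^{0},\plainvec^{1},i)$ to session $i$ and the query $(\plainvec^{0},\tilde{\plainvec}^{1},i')$ to session $i'$, where the two \emph{left} plaintexts agree and the two \emph{right} plaintexts are chosen distinct, $\plainvec^{1}\neq\tilde{\plainvec}^{1}$. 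Finally compare the two returned ciphertext blocks and output $b'=0$ if they are equal and $b'=1$ otherwise.

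For correctness, observe that when $b=0$ both oracles compute $\varepsilon_{_{\theta}}(f(v),\plainvec^{0})$ with the identical state, so the two returned ciphertexts coincide (Equation~\ref{lab:DSPLC2}) and the adversary outputs $b'=0=b$; thus $Pr_{_{0}}(eval(\mathcal{A})=1)=1$. When $b=1$ the oracles compute $\varepsilon_{_{\theta}}(f(v),\plainvec^{1})$ and $\varepsilon_{_{\theta}}(f(v),\tilde{\plainvec}^{1})$ with $\plainvec^{1}\neq\tilde{\plainvec}^{1}$, which by the standing collision-resistance of $\varepsilon_{_{\theta}}$ differ with probability at least a noticeable $\nu(k)$ (Equation~\ref{lab:DSPLC22}); then $b'=1=b$ with the same probability, so $Pr_{_{1}}(eval(\mathcal{A})=1)\geq\nu(k)$. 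Consequently $\Adv^{{\rm LORBACPA}(\matr{IV},E)}_{\mathcal{A},S}(k)=2\left|Pr(eval(\mathcal{A})=1)-\frac{1}{2}\right|\geq\nu(k)$ is non-negligible, so $S$ is not ${\rm LORBACPA}(\matr{IV},E)$ secure.

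The decryption-update case is proved verbatim after replacing $\state$ by $\hstate$, using the synchronized oracle $SE$ of Definition~\ref{LORBA SE} (so that the relevant state offset is $\delta_{_{SE}}=1$), and invoking $\hstate(\tau_{_{0}})=f(\matr{IV})$ to force $\hstate_{i}(\tau_{_{0}})=\hstate_{i'}(\tau_{_{0}})$. The only point that requires care --- and the main (mild) obstacle --- is efficiency: the attack must drive each session up to local time $\tau_{_{0}}$, so one needs $\tau_{_{0}}$ to be polynomially bounded in $k$ for $\mathcal{A}$ to remain polynomial-time. This is automatic in all the concrete designs considered, where $\tau_{_{0}}$ lies in the initialization/synchronization window and is $O(1)$. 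No other subtlety arises, since the whole force of the argument is that a chosen, reused $\matr{IV}$ collapses the state at $\tau_{_{0}}$ to a common value independently of $b$.
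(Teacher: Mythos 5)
Your proposal is correct and follows essentially the same route as the paper's proof: both attacks exploit the chosen-$\matr{IV}$ capability to initialize two sessions with the same $\matr{IV}$, forcing both internal states at time $\tau_{_{0}}$ to equal $f(\matr{IV})$, and then submit queries whose left plaintexts agree and whose right plaintexts differ, so that a ciphertext collision reveals $b$ (the paper's specific choice $(\plainvec_{_{!}},\plainvec_{_{!}},i)$ and $(\plainvec_{_{!}},0,i')$ is just an instance of your template). Your write-up merely makes explicit what the paper leaves implicit --- the distinguishing rule, the advantage bound via the collision-resistance assumption on $\varepsilon_{_{\theta}}$, and the requirement that $\tau_{_{0}}$ be polynomially bounded.
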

\begin{proof}{ For a collision attack, the adversary initializes two sessions $i$ and $i'$ with the same initial values $\matr{IV}$, respectively. 
		Then it chooses $\plainvec_{i,o}^0(\tau_{_{0}})=\plainvec_{i,o}^1(\tau_{_{0}})=\tilde{\plainvec}_{i',o}^0(\tau_{_{0}})=\plainvec_{_{!}} \not = 0$ uniformly at random and set $\tilde{\plainvec}_{i',o}^1(\tau_{_{0}})=0$. Since $\state(\tau_{_{0}})={\state}'(\tau_{_{0}})=f(\matr{IV})$ (resp. $\hstate(\tau_{_{0}})={\hstate}'(\tau_{_{0}})=f(\matr{IV})$), this gives rise to a collision for the queries
		$$(\plainvec_{_{!}} ,\plainvec_{_{!}} ,i)=q(o,i,\tau_{_{0}},\nu) \quad {\rm and }  \quad
		(\plainvec_{_{!}} ,0,i')=q(o,i',\tau_{_{0}},\nu').$$ 
}\end{proof}

 An important consequence of this corollary is the fact that the state vectors of a secure
${\rm LORBACPA}(\matr{IV},-)$ self-synchronized stream cipher must possess a pseudorandom nature and should not be a deterministic function of the initial vector $\matr{IV}$.

Before we proceed any further, let us consider some well-known self-synchronized block-cipher modes and their security in our proposed models.

 In \cite{JMV02} A.~Joux~{\it et.al.} have shown that the $\matr{CBC}$ encryption mode cannot be IND secure in the blockwise adversarial model,
which implies that, in our language, the $\matr{CBC}$ mode is not
${\rm LORBACPA}(\$\matr{IV},E)$ secure. Subsequently, P.~Fouque~{\it et.al.} \cite{FMP03} using an output delay procedure, introduced the {\it delayed } $\matr{CBC}$ mode, $\matr{DCBC}$ (see Example~\ref{exm:DCBC}), and proved that it is secure in the
blockwise model (assuming the security of the underlying block cipher).
This implies that the $\matr{DCBC}$ mode is
${\rm LORBACPA}(\$\matr{IV},E)$ secure under the same assumptions
(note that our ${\rm LORBACPA}(\$\matr{IV},E)$ security is equivalent to
the ${\rm LORC-BCPA}$ security in \cite{FMP03}).
On the other hand, the $\matr{DCBC}$ mode is not
${\rm LORBACPA}(\matr{IV},-)$ secure by Corollary~\ref{cor:main}.
Clearly, the $\matr{DCBC}$ mode is also not secure in the
${\rm LORBACPA}(\$\matr{IV},SE)$ setting by the collision attack presented at the beginning of this section.

 For the $\matr{CFB}$ mode, by the results of \cite{FMP03}, we know that the scheme is provably secure in the blockwise model, assuming the security of the underlying block cipher (or function) which is equivalent to our
${\rm LORBACPA}(\$\matr{IV},E)$ security. It is instructive to note that the
update and output functions of an $SE(\$\matr{IV},b)$ oracle for the $\matr{CFB}$ mode
is identical to those of the oracle $E(\$\matr{IV},b)$. This, in particular, proves that
the $\matr{CFB}$ mode is also ${\rm LORBACPA}(\$\matr{IV},SE)$ secure.
On the other hand, the $\matr{CFB}$ mode is not
${\rm LORBACPA}(\matr{IV},-)$ secure by Corollary~\ref{cor:main}.

 \subsection{A modification}
 
 This section is going to serve as an appetizer before we focus on our final proposal 
 in the next section. 
 Based on our results of the previous section we understand that we have to choose a 
 pseudorandom state update procedure in order to prevent insecurity in LORBACPA$^+$ 
 model.  Our major objective in this section is to analyze the performance of a modification on well-known encryption modes based on choosing this approach and adding iterations of a nilpotent linear function for finite-time self-synchronization to the scheme. 
 We will see that although this modification will not give rise to better encryption modes but the analysis will pave the way to introduce our proposed scheme in the next section in which we will have to prevent linearity and we also force the iteration function to depend on the secret key.
 
 \subsubsection{A modified $\matr{DCBC}$ mode}\label{sec:MDCBC}
 
 Let us introduce the modified $\matr{DCBC}$ mode as follows.

\begin{defin}{The $\matr{MDCBC}$ mode.
\begin{itemize}
	\item[-] ${f}$: is a known linear function with a natural number 	$n_{_{0}} > 1$ such that
	 $$\forall\ x\, \ \ {f}^{n_{_{0}}}(x) \isdef {f}({f}(...({f}(x))))=0.$$

 	\item[-]
	{\scriptsize
	\begin{equation}\label{esssc2}
	\begin{array}{ll}
	\Enc_{_{\theta}}(t):
	\left\lbrace \begin{array}{l}
			{\rm Initialize }\ \ \matr{ IV},\\
	\state(0) \overset{\$}{\leftarrow} \Field_{_{2}}^{n},\\
	If \ t=0 \ then\\		
	\ \ \mathbf{c}(1) =\matr{IV},\\
	 \ \ \state(1) =\matr{IV}\oplus {f}(\state(0)),\\
	 \ \ {\rm Output:} \ \bot\\
	Else \\
	\ \ \mathbf{c}(t+1) = E_{\key}(\state(t) \oplus \plainvec(t)),\\
	\ \ \state(t+1) = E_{\key}(\state(t) \oplus \plainvec(t)) \oplus {f}(\state(t)),\\
	 \ \ {\rm Output:} \ \mathbf{c}(t).\\
	\end{array}
	\right.
	&
	\Dec_{_{\theta}}(t):
	\left\lbrace \begin{array}{l}
	Receive \ \matr{ IV},\\
	\hstate(1) \overset{\$}{\leftarrow} \Field_{_{2}}^{n},\\
	If \ t=1 \ then \\
	 \ \ \hstate(2)=\matr{IV}\oplus {f}(\hstate(1)),\\
\ \ \hp(t)=\ {\rm Ack},\\
\ \ {\rm Output:} \ \hp(t)\\
	Else\\
	\ \ \hstate(t+1) = \mathbf{c}(t)\oplus {f}(\hstate(t)),\\
	\ \ \hp(t) = D_{\key}(\mathbf{c}(t)) \oplus \hstate(t),\\
\ \ {\rm Output:} \ \hp(t).\\
	\end{array}
	\right.
	\end{array}
	\end{equation}
}

 \end{itemize}
Note that for the $\matr{MDCBC}$ mode we have $t_{_{c}}=1$ and $d=1$. In what follows we show that also $t_{_{s}}= n_{_{0}}$.

 }\end{defin}

 \begin{lem} The $\matr{MDCBC}$ mode is {\it finite-time self-synchronized} with $t_{_{s}}= n_{_{0}}$, i.e.
$$ \forall t \geq t_{_{s}}= n_{_{0}}, \ \ \hstate(t+1)=\state(t).$$
\end{lem}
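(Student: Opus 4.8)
The plan is to track the synchronization error between the encryptor's state register and the decryptor's state register and to show that this error is simply fed through the nilpotent map $f$ at each clock, so that it is annihilated after $n_{_{0}}$ steps. Concretely, I would define the error vector $\mathbf{e}(t) \isdef \hstate(t+1) \oplus \state(t)$ for $t \geq 0$, which measures the failure of the desired synchronization relation $\hstate(t+1) = \state(t)$ (recall that the delay is $d=1$). The goal is then to prove that $\mathbf{e}(t) = \mathbf{0}$ for all $t \geq n_{_{0}}$.

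First I would rewrite both state recursions in a common form. From the encryption rule, for $t \geq 1$ we have $\state(t+1) = \mathbf{c}(t+1) \oplus f(\state(t))$, hence $\state(t) = \mathbf{c}(t) \oplus f(\state(t-1))$ for $t \geq 2$, while the $t=0$ branch gives $\state(1) = \matr{IV} \oplus f(\state(0))$. From the decryption rule, $\hstate(t+1) = \mathbf{c}(t) \oplus f(\hstate(t))$ for $t \geq 2$, while the $t=1$ branch gives $\hstate(2) = \matr{IV} \oplus f(\hstate(1))$. The crucial observation is that the ciphertext symbol $\mathbf{c}(t)$ appearing in the decryptor's update at step $t$ is exactly the one generated by the encryptor; consequently, when forming $\mathbf{e}(t) = \hstate(t+1) \oplus \state(t)$ the common $\mathbf{c}(t)$ (respectively the common $\matr{IV}$ at the first step) cancels. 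Using linearity of $f$, this leaves, for every $t \geq 1$,
$$\mathbf{e}(t) = f(\hstate(t)) \oplus f(\state(t-1)) = f\big(\hstate(t) \oplus \state(t-1)\big) = f(\mathbf{e}(t-1)).$$

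Iterating this first-order recursion gives $\mathbf{e}(t) = f^{t}(\mathbf{e}(0))$ for all $t \geq 0$, where $\mathbf{e}(0) = \hstate(1) \oplus \state(0)$ is the (arbitrary) mismatch of the two independently and randomly chosen initial states. Since $f$ is nilpotent with $f^{n_{_{0}}} \equiv 0$, for any $t \geq n_{_{0}}$ we may write $\mathbf{e}(t) = f^{t - n_{_{0}}}\big(f^{n_{_{0}}}(\mathbf{e}(0))\big) = f^{t-n_{_{0}}}(\mathbf{0}) = \mathbf{0}$, which is precisely $\hstate(t+1) = \state(t)$ for all $t \geq t_{_{s}} = n_{_{0}}$, as claimed. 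Note that this holds regardless of the initial mismatch $\mathbf{e}(0)$, which is exactly the self-synchronization property.

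The linearity-plus-nilpotency argument is the easy part; the step that needs genuine care is the bookkeeping of the time indices together with the special first-step branches. I would double-check that the delay $d = 1$ is handled consistently (i.e. that $\mathbf{e}(t)$ really compares $\hstate(t+1)$ with $\state(t)$ rather than with $\state(t+1)$), and that the base case $\mathbf{e}(1) = f(\mathbf{e}(0))$ produced by the $t=0$ encryption branch and the $t=1$ decryption branch agrees with the generic recursion, so that the single closed form $\mathbf{e}(t) = f^{t}(\mathbf{e}(0))$ holds uniformly for all $t \geq 0$. Once that alignment is verified, the conclusion is immediate.
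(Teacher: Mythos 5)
Your proposal is correct and follows essentially the same route as the paper's proof: both define the synchronization error $\mathbf{e}(t)=\hstate(t+1)\oplus\state(t)$, use the fact that the common ciphertext block (and the common $\matr{IV}$ at the first step) cancels to obtain the recursion $\mathbf{e}(t)=f(\mathbf{e}(t-1))$, and conclude from the nilpotency $f^{n_{_{0}}}\equiv 0$. Your write-up is in fact slightly more careful than the paper's, since you explicitly verify that the special $t=0$ encryption branch and $t=1$ decryption branch produce the same recursion, a point the paper leaves implicit.
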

\begin{proof}{		\begin{equation}\label{lab:MDCBC}
		\begin{array}{ll}
		\ke&\isdef \hstate(t+2)-\state(t+1)\\
	&\ = \mathbf{c}(t+1)\oplus {f}(\hstate(t+1))-(E_{\key}(\state(t) \oplus \plainvec(t))\oplus {f}(\state(t)))\\
		&\ ={f}(\hstate(t+1))-{f}(\state(t))
		 ={f}(\hstate(t+1)-\state(t))={f}(\ek).
		\end{array}
		\end{equation}
		Hence, according to the definition of the $\matr{MDCBC}$ system, the error is equal to zero for $t \geq t_{_{s}}= n_{_{0}}$.
}\end{proof}
	
The following lemma shows that our modification is not effective in the presence of an $SE$ oracle.
\begin{lem}\label{lem:MDCBC2}
The $\matr{MDCBC}$ mode is not ${\rm LORBACPA}(\$\matr{IV},SE)$ secure.
\end{lem}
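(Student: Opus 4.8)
The plan is to build an explicit ${\rm LORBACPA}(\$\matr{IV},SE)$ collision attack whose advantage is essentially $1$, thereby reducing the claim to the self-synchronization property just established. The first step is to read off the dynamics that an $SE(\$\matr{IV},b)$ oracle actually presents for the $\matr{MDCBC}$ mode: by the definition of the synchronized oracle it enciphers with the genuine map $\varepsilon_{_{\theta}}(\state(t),\plainvec)=E_\key(\state(t)\oplus\plainvec)$ but advances its register through the \emph{decryption} update, so a query carrying a block $\plainvec^b$ while the oracle holds register $\state(t)$ returns $\ciphervec(t)=E_\key(\state(t)\oplus\plainvec^b)$ and then sets $\hstate(t+1)=\ciphervec(t)\oplus f(\hstate(t))$, the chain being started from the random seed $\hstate(1)$ and from $\hstate(2)=\matr{IV}\oplus f(\hstate(1))$.

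Second, and this is the crux, I would show that the nilpotency of $f$ makes this register publicly computable. Writing $\ciphervec(1):=\matr{IV}$ and using linearity of $f$, unrolling the recursion gives
\[
\hstate(t+1)=\bigoplus_{j=0}^{t-1} f^{j}(\ciphervec(t-j))\ \oplus\ f^{t}(\hstate(1)).
\]
For $t\geq n_{_{0}}$ the term $f^{t}(\hstate(1))$ vanishes and all summands with $j\geq n_{_{0}}$ drop out, so
\[
\hstate(t+1)=\bigoplus_{j=0}^{n_{_{0}}-1} f^{j}(\ciphervec(t-j)),
\]
which the adversary can evaluate from the transmitted ciphertexts (with $\ciphervec(1)=\matr{IV}$) and the public $f$. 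This is exactly the finite-time self-synchronization read from the attacker's side: after $n_{_{0}}$ steps the register is a deterministic, publicly evaluable function of channel data with no residual dependence on the secret seed. It is the $\matr{MDCBC}$ analogue of Corollary~\ref{cor:main}, and the step where one sees that inserting the nilpotent $f$ fails to hide the register from an $SE$ adversary.

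Third, I would assemble the collision attack. The adversary opens two $SE$ sessions $i,i'$ and in each submits $n_{_{0}}$ \emph{dummy} queries with $\plainvec^{0}=\plainvec^{1}$; being $b$-independent they leak nothing about $b$ yet reveal the ciphertexts, so by the displayed formula the adversary computes the current registers $s_{1}$ and $s_{2}$ at the next query slots of the two sessions. It then asks $(\plainvec^{0},\plainvec^{1},i)=(\mathbf{0},\plainvec_{_{!}},i)$ and $(\tilde{\plainvec}^{0},\tilde{\plainvec}^{1},i')=(s_{1}\oplus s_{2},\mathbf{0},i')$, choosing $\plainvec_{_{!}}\neq s_{1}\oplus s_{2}$ (legitimate, since $s_{1},s_{2}$ are known and $b$-independent). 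For $b=0$ both ciphertexts feed $E_\key$ the common input $s_{1}=s_{2}\oplus(s_{1}\oplus s_{2})$, forcing a collision; for $b=1$ the inputs are $s_{1}\oplus\plainvec_{_{!}}$ and $s_{2}$, which differ by the choice of $\plainvec_{_{!}}$. The adversary outputs $b'=0$ iff the two returned ciphertexts coincide.

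Finally, since $E_\key$ is a permutation, distinct inputs yield distinct ciphertexts, so the guess is correct with probability $1$ (and, under only the noticeable non-collision hypothesis on $\varepsilon_{_{\theta}}$, with overwhelming probability); hence $\Adv^{{\rm LORBACPA}(\$\matr{IV},SE)}_{\mathcal{A},\matr{MDCBC}}(k)$ is non-negligible and the mode is insecure. The only delicate point is the index bookkeeping between the query counter and the internal clock $t$, so as to guarantee that both of the two decisive query slots satisfy $t\geq n_{_{0}}$; once the first step is in place, everything else is forced by the algebra.
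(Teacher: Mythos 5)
Your proposal is correct and follows essentially the same route as the paper's proof: both unroll the $SE$ oracle's recursion, use linearity and nilpotency of the public $f$ to show that after $t_{_{s}}=n_{_{0}}$ steps the register $\hstate(t+1)=\bigoplus_{j=0}^{n_{_{0}}-1}f^{j}(\ciphervec(t-j))$ is publicly computable from the channel data (eliminating the random seed), and then mount a two-session collision attack whose final query pair is shifted by the known register difference so that the $E_{\key}$ inputs collide exactly when $b=0$. Your variations (dummy warm-up queries with equal pairs, and choosing $\plainvec_{_{!}}\neq s_{1}\oplus s_{2}$ after computing the registers, which makes the advantage exactly $1$ rather than overwhelmingly close to it) are inessential refinements of the paper's argument.
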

\begin{proof}{
Recall that an $SE(\$\matr{IV},b)$ oracle operates as follows,
\begin{equation}\label{eq:sssen'}
\Enc'_{_{\theta}}(t):
\left\lbrace \begin{array}{l}
	Initialize \ \matr{ IV},\\
	\hstate(1) \overset{\$}{\leftarrow} \Field_{_{2}}^{n},\\
	If \ t=1 \ then \\
\ \ \hstate(2)={f}(\hstate(1))\oplus\matr{IV},\\
\ \ \mathbf{c}(1)= \matr{IV},\\
\ \ {\rm Output:} \ \mathbf{c}(1)\\
If\ \plainvec(t-1)=stop \ then \\
\ \ {\rm Output:}\ \bot\\
Else\\
\ \	\mathbf{c}(t)=E_{\key}(\hstate(t) \oplus \plainvec(t-1)),\\
\ \ \hstate(t+1)=\mathbf{c}(t)\oplus {f}(\hstate(t)), \\

 \ \ {\rm Output:} \ \mathbf{c}(t).
\end{array}
\right.
\end{equation}

We introduce a collision attack whose success probability is equal to one.
For this, the adversary initializes two sessions $i$ and $i'$ with two random initial values $\matr{IV}$
and $\tilde{\matr{IV}}$, respectively. Then it chooses at random a sequence $\{(\plainvec_{i,SE}^0(\tau),\plainvec_{i,SE}^1(\tau))\} , 0<\tau<t_{_{s}} $ with 
$\plainvec_{i,SE}^0(t_{_{s}})=\plainvec_{i,SE}^1(t_{_{s}})=\plainvec_{_{!}} \not = 0$ uniformly at random and receives $\{\ciphervec^{b}(1),\cdots ,\ciphervec^{b}(t_{_{s}}) \}$. Then it chooses two random sequences $\{(\tilde{\plainvec}_{i',SE}^0(\tau),\tilde{\plainvec}_{i',SE}^1(\tau))\} , 0<\tau<t_{_{s}}  $ and receives $\{\tilde{\ciphervec}^{b}(1),\cdots ,\tilde{\ciphervec}^{b}(t_{_{s}}) \}$. Then it sets $\tilde{\plainvec}_{i',SE}^1(t_{_{s}})=0$ and $\tilde{\plainvec}_{i,SE}^0(t_{_{s}})=\plainvec_{_{*}}$, with

 $$\plainvec_{_{*}}=\plainvec_{_{!}}\oplus\ciphervec^{b}(t_{_{s}})\oplus\cdots\oplus {f}^{t_{_{s}}-1}({\matr{IV}})\oplus\tilde{\ciphervec}^{b}(t_{_{s}})\oplus\cdots\oplus {f}^{t_{_{s}}-1}(\tilde{\matr{IV}})$$ for session $i'$. 
Since we have 
$$ \hstate(t_{_{s}}+1)=\mathbf{c}^{b}(t_{_{s}})\oplus {f}(\ciphervec^{b}(t_{_{s}}-1))\oplus\cdots\oplus {f}^{(t_{_{s}}-1)}({\matr{IV}})$$
and 
$$ {\hstate'}(t_{_{s}}+1)=\tilde{\mathbf{c}}^{b}(t_{_{s}})\oplus {f}(\tilde{\ciphervec}^{b}(t_{_{s}}-1))\oplus\cdots\oplus {f}^{(t_{_{s}}-1)}(\tilde{\matr{IV}}),$$
 this gives rise to a  collision for the queries
$$(\plainvec_{_{!}} ,\plainvec_{_{!}} ,i)=q(SE,i,t_{_{s}}+1,\nu) \quad {\rm and }  \quad
(\plainvec_{_{*}} ,0,i')=q(SE,i',t_{_{s}}+1,\nu').$$

 }\end{proof}

 \subsubsection{A modified $\matr{CFB}$ mode}\label{sec:MCFB}

Now let us concentrate on a modified version of the $\matr{CFB}$ mode as follows,

 \begin{defin}{The $\matr{MCFB}$ mode.

 \begin{itemize}
	\item[-] ${f}$: is a known linear function such that
	there exists a natural number 	$n_{_{0}} > 1$ such that
	$$\forall\ x\, \ \ {f}^{n_{_{0}}}(x) \isdef {f}({f}(...({f}(x))))=0.$$
	
	\item[-]
	{\scriptsize
	\begin{equation}\label{esssc3}
	\begin{array}{ll}
	\Enc_{_{\theta}}(t):
	\left\lbrace \begin{array}{l}
	Initialize \ \matr{ IV},\\
\state(0) \overset{\$}{\leftarrow} \Field_{_{2}}^{n},\\
	If \ t=0 \ then \\
	\ \ \state(1) = E_{\key}(\matr{IV} )\oplus {f}(\state(0)),\\
	\ \ {\rm Output:} \ \bot\\
	Else\\
	\ \ \kx = E_{\key}(\mathbf{c}(t) )\oplus {f}(\state(t)),\\
	\ \ \mathbf{c}(t) = \uk\oplus \xk,\\
	{\rm Output:} \ \mathbf{c}(t).
	\end{array}
	\right.
	&
	\Dec_{_{\theta}}(t):
	\left\lbrace \begin{array}{l}
	Receive \ \matr{ IV},\\
	\hstate(0) \overset{\$}{\leftarrow} \Field_{_{2}}^{n},\\
	If \ t=0 \ then \\
	\ \ \hstate(1) = E_{\key}(\matr{IV} )\oplus {f}(\hstate(0)),\\
	\ \ {\rm Output:} \ Ack\\
	Else\\
	\ \ \hx = E_{\key}(\mathbf{c}(t))\oplus {f}(\hstate(t)),\\
	\ \ \uh=\xh\oplus\mathbf{c}(t),\\
	\ \ {\rm Output:} \ \uh.
	\end{array}
	\right.
	\end{array}
	\end{equation}
}
\end{itemize}
Note that for the $\matr{MCFB}$ mode we have $t_{_{c}}=1$ and $d=0$. In what follows we show that also $t_{_{s}}= n_{_{0}}$.
}\end{defin}

 \begin{lem} The $\matr{MCFB}$ mode is {\it finite-time self-synchronized} with $t_{_{s}}= n_{_{0}}$, i.e.
	$$ \forall t \geq t_{_{s}}, \ \ \hstate(t)=\state(t).$$
\end{lem}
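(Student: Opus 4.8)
The plan is to mirror the computation already carried out for the $\matr{MDCBC}$ mode, the only difference being that here the delay is $d=0$, so the relevant synchronization error is $\ek \isdef \xh - \xk = \hstate(t) - \state(t)$ rather than a delayed version. First I would set up this error and track how it propagates from one clock to the next, and then invoke nilpotency of $f$ to conclude that it vanishes after $n_{_{0}}$ steps.

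The key step is the derivation of the recursion governing $\ek$. Using the update rules of $\Enc_{_{\theta}}$ and $\Dec_{_{\theta}}$ in Equation~\ref{esssc3} for $t \geq 1$, and recalling that the decoder feeds on the \emph{same} transmitted ciphertext $\ciphervec(t)$ that the encoder produced (this is precisely the self-synchronizing nature of a $\matr{CFB}$-type feedback), I would compute
\[
\ke \isdef \hx - \kx = \bigl(E_{\key}(\ciphervec(t)) \oplus f(\xh)\bigr) - \bigl(E_{\key}(\ciphervec(t)) \oplus f(\xk)\bigr) = f(\xh - \xk) = f(\ek),
\]
where the cancellation of the $E_{\key}(\ciphervec(t))$ term is the crucial observation and the last equality uses linearity of $f$. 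The same cancellation holds at $t=0$, with $E_{\key}(\matr{IV})$ in place of $E_{\key}(\ciphervec(t))$, so the recursion $\ke = f(\ek)$ in fact holds for all $t \geq 0$.

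Iterating this recursion gives $\mathbf{e}(t) = f^{t}(\mathbf{e}(0))$, where $\mathbf{e}(0) = \hstate(0) - \state(0)$ is the (in general nonzero) difference of the two independently chosen random initial states. Since $f$ is nilpotent with $f^{n_{_{0}}}(x) = 0$ for every $x$, we obtain $\mathbf{e}(t) = f^{t-n_{_{0}}}\bigl(f^{n_{_{0}}}(\mathbf{e}(0))\bigr) = 0$ for all $t \geq n_{_{0}}$, i.e. $\hstate(t) = \state(t)$ for every $t \geq t_{_{s}} = n_{_{0}}$, as claimed.

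I do not expect any serious obstacle here: the argument is essentially linear algebra once the error recursion is in place. The one point that must be handled with care — and the reason the modification works at all — is justifying that the nonlinear block-cipher term $E_{\key}(\ciphervec(t))$ genuinely cancels, which rests on the fact that $\ciphervec(t)$ is the common channel symbol seen by both transmitter and receiver and that $f$ enters the two state updates identically and linearly; any asymmetry there would leave a residual term and break the telescoping to $f(\ek)$.
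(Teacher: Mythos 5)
Your proposal is correct and follows essentially the same route as the paper: you derive the error recursion $\mathbf{e}(t+1)=f(\mathbf{e}(t))$ by cancelling the common term $E_{\key}(\mathbf{c}(t))$ (and $E_{\key}(\matr{IV})$ at $t=0$) and then invoke nilpotency of $f$ to kill the error after $n_{_{0}}$ steps. The only difference is cosmetic: you make the iteration $\mathbf{e}(t)=f^{t}(\mathbf{e}(0))$ explicit, whereas the paper leaves that final step implicit.
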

\begin{proof}{		\begin{equation}\label{lab:MCFB }
		\begin{array}{ll}
		\ke& \isdef \hstate(t+1)-\state(t+1)\\
		& \ = E_{\key}(\mathbf{c}(t))+{f}(\hstate(t))-E_{\key}(\mathbf{c}(t))-{f}(\state(t))\\
		& \ ={f}(\hstate(t))-{f}(\state(t))
		={f}(\hstate(t)-\state(t))={f}(\ek).
		\end{array}
		\end{equation}
		Hence, according to the definition of the $\matr{MCFB}$ system, the error is equal to zero for $t \geq t_{_{s}}= n_{_{0}}$.
}\end{proof}
%######################################
\begin{lem}\label{lem:MCFB2}
	The $\matr{MCFB}$ mode is not ${\rm LORBACPA}(\matr{IV},E)$ secure. 	
\end{lem}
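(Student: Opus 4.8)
The plan is to show that $\matr{MCFB}$ succumbs to a collision attack in the spirit of Corollary~\ref{cor:main}, even though that corollary does not apply verbatim. In the $E$ oracle for $\matr{MCFB}$ one has $\ciphervec(t)=\plainvec(t)\oplus\state(t)$ together with $\state(1)=E_{\key}(\matr{IV})\oplus f(\state(0))$, so the first keystream block $\state(1)$ is \emph{not} a deterministic function of $\matr{IV}$: the random initialization $\state(0)\overset{\$}{\leftarrow}\Field_{_{2}}^{n}$ survives through the term $f(\state(0))$. Hence Corollary~\ref{cor:main} cannot be invoked directly, and defeating this randomization is the whole point of the modification and the main obstacle to overcome.

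The decisive observation is that the surviving randomness is confined to a \emph{known proper} subspace. Since $f$ is nilpotent with $f^{n_{_{0}}}=\mathbf{0}$, the map $f$ is singular, so $\mathrm{Im}(f)\subsetneq\Field_{_{2}}^{n}$ is a proper subspace that the adversary can compute because $f$ is public. Consequently $f(\state(0))\in\mathrm{Im}(f)$, and if the adversary opens two sessions $i$ and $i'$ with the \emph{same} $\matr{IV}$, the common term $E_{\key}(\matr{IV})$ cancels in the difference of their first states:
\[
\state_{i}(1)\oplus\state_{i'}(1)=f(\state_{i}(0))\oplus f(\state_{i'}(0))=f\big(\state_{i}(0)\oplus\state_{i'}(0)\big)\in\mathrm{Im}(f).
\]
Thus, although the states are randomized, they coincide \emph{modulo} $\mathrm{Im}(f)$, and this is exactly the weak form of ``$\state(\tau_{0})$ is a function of $\matr{IV}$'' that a collision attack can exploit, because the output map $\varepsilon_{_{\theta}}(\state,\plainvec)=\plainvec\oplus\state$ is a plain XOR.

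Concretely, I would fix some $\plainvec_{_{!}}\in\Field_{_{2}}^{n}\setminus\mathrm{Im}(f)$ (possible, as $\mathrm{Im}(f)$ is proper) and, using an $E$ oracle, initialize sessions $i$ and $i'$ with a common $\matr{IV}$. At their first blocks I submit $(\plainvec_{_{!}},\plainvec_{_{!}},i)$ to session $i$ and $(\plainvec_{_{!}},\mathbf{0},i')$ to session $i'$. Then $\ciphervec_{i}(1)=\plainvec_{_{!}}\oplus\state_{i}(1)$ for either value of $b$, while $\ciphervec_{i'}(1)=\plainvec_{_{!}}\oplus\state_{i'}(1)$ if $b=0$ and $\ciphervec_{i'}(1)=\state_{i'}(1)$ if $b=1$. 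Using the cancellation above,
\[
\ciphervec_{i}(1)\oplus\ciphervec_{i'}(1)=
\begin{cases}
f\big(\state_{i}(0)\oplus\state_{i'}(0)\big)\in\mathrm{Im}(f), & b=0,\\[1mm]
\plainvec_{_{!}}\oplus f\big(\state_{i}(0)\oplus\state_{i'}(0)\big)\notin\mathrm{Im}(f), & b=1.
\end{cases}
\]
The adversary outputs $b'=0$ iff $\ciphervec_{i}(1)\oplus\ciphervec_{i'}(1)\in\mathrm{Im}(f)$, a membership test it can perform by linear algebra since $f$ is public. This decides $b$ with certainty, giving advantage $1$ and proving that $\matr{MCFB}$ is not ${\rm LORBACPA}(\matr{IV},E)$ secure.

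I expect the remaining points to be bookkeeping rather than genuine difficulty. One must check that it is the choice of the \emph{same} $\matr{IV}$ that forces the $E_{\key}(\matr{IV})$ terms to cancel, so the attack genuinely uses the chosen-$\matr{IV}$ power of the model and would fail against independent random IVs, consistent with the result being stated only for ${\rm LORBACPA}(\matr{IV},E)$; and that the first block $\tau_{0}=1$ is the right place to strike, since from the second block on the projected states $\state(t)\bmod\mathrm{Im}(f)=E_{\key}(\ciphervec(t-1))\bmod\mathrm{Im}(f)$ depend on the diverging ciphertext histories and no longer agree across sessions. The conceptual hurdle---recognizing that the added nilpotent randomization hides the state only inside the proper subspace $\mathrm{Im}(f)$, and is therefore defeated by testing collisions in the quotient $\Field_{_{2}}^{n}/\mathrm{Im}(f)$---is the heart of the argument.
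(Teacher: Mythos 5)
Your proposal is correct and is essentially the paper's own attack: the paper likewise opens two sessions with the same $\matr{IV}$, submits the queries $(\plainvec_{!},\plainvec_{!},i)$ and $(\plainvec_{!},\mathbf{0},i')$ at the first block, and kills the randomization $f(\state(0))$ by public linear algebra --- there by applying $f^{t_{s}-1}$ to the ciphertexts (equivalently, testing the XOR for membership in $\ker(f^{t_{s}-1})\supseteq\mathrm{Im}(f)$) instead of your membership test in $\mathrm{Im}(f)$. If anything, your version is slightly tighter, since you explicitly pick $\plainvec_{!}\notin\mathrm{Im}(f)$ and thereby guarantee advantage $1$, whereas the paper chooses $\plainvec_{!}$ at random and implicitly needs $f^{t_{s}-1}(\plainvec_{!})\neq 0$ for its claimed probability-one collision.
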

\begin{proof}{
		 The $\matr{MCFB}$ mode is $\matr{CFB}$ mode with a linear transformation    $\matr{MCFB}$ mode is a simple reduction of $\matr{MCFB}$ mode. For any $t>0$, with Linear transformation $f_\key^{t_{_{s}}-1}(.)$ over encryption update and output functions $\matr{MCFB}$ mode, we have
			\begin{equation}\label{esssc4}
		\begin{array}{ll}
		\Enc"_{_{\theta}}(t):
		\left\lbrace \begin{array}{l}
		Initialize \ \matr{ IV},\\
		\state(0) \overset{\$}{\leftarrow} \Field_{_{2}}^{n},\\
		If \ t=0 \ then \\
		\ \ \state(1) =E_{\key}(\matr{IV} )\oplus {f}(\state(0)),\\
		\ \ {\rm Output:} \ \bot\\
		Else\\
		\ \ \kx = E_{\key}(\mathbf{c}(t) )\oplus {f}(\state(t)),\\
		\ \ \mathbf{c}'(t) ={f}^{t_{_{s}}-1}(\mathbf{c}(t))={f}^{t_{_{s}}-1}(\uk)\oplus {f}^{t_{_{s}}-1}(\xk),\\
		{\rm Output:} \ \mathbf{c}(t).
		\end{array}
		\right.
		&
		\end{array}
		\end{equation}

		The following collision attack finds a collision with probability $1$. For this,
		the adversary initializes two sessions $i$ and $i'$ with the same initial values $\matr{IV}$. Then it chooses $\plainvec_{i,E}^0(1)=\plainvec_{i,E}^1(1)=\plainvec_{_{!}} \not = 0$ uniformly at random. On the other hand, it sets  $\tilde{\plainvec}_{i',E}^0(1)=\plainvec_{_{!}}, \tilde{\plainvec}_{i',E}^1(1)=0$.
		
		Since $\mathbf{c}'(t)={f}^{t_{_{s}}-1}(\state(t)\oplus \uk)$, $\state(1)= E_{\key}(\matr{IV} )\oplus {f}(\state(0))$ and $\tilde{\state}(1)=E_{\key}(\matr{IV} )\oplus {f}(\tilde{\state}(0))$, this gives rise to a collision for the queries
	$$(\plainvec_{_{!}} ,\plainvec_{_{!}} ,i)=q(E,i,1,\nu) \quad {\rm and }  \quad
	(\plainvec_{_{!}} ,0,i')=q(E,i',1,\nu').$$ 			
}\end{proof}
%######################

This shows that $\matr{MCFB}$ mode is not
${\rm LORBACPA}(\matr{IV},-)$ secure.
The following table summarizes our results so far.

%	\caption{ The summarizes our results for $\matr{S}_{\sigma}^4(\mathfrak{P})$,$\matr{CBC}$ ,$\matr{DCBC}$ ,$\matr{CFB}$ cryptosystem.}\label{Tab:M}
\begin{center}\label{tbl:t1?}
\begin {table}[H]
\caption {Security results for $\matr{CBC}$, $\matr{CFB}$, $\matr{DCBC}$, and $\matr{S}_{\sigma}^4(\mathfrak{P})$  cryptosystems.} \label{tab:title} 
	
\begin{tabular}{|c|c|c|c|c|c|c|c|}
	
\hline {\tiny System$/{\rm LORBACPA}$}
&{\footnotesize $(\$\matr{IV},E)$} & {\footnotesize $(\matr{IV},E)$} & {\footnotesize $(\$\matr{IV},SE)$} & {\footnotesize $(\matr{IV},SE)$} & {\footnotesize $ (\$\matr{IV},(E\&SE))$}& {\footnotesize $^+$}& Ref. \\
\hline $\matr{CBC}$&$\times$&$\times$&$\times$&$\times$&$\times$&$\times$&\cite{FMP03}\\
\hline $\matr{DCBC}$&$\surd$&$\times$&$\times$&$\times$&$\times$&$\times$&
\cite{FMP03}, Example~\ref{exm:DCBC}, Cor.~\ref{cor:main} \\
\hline$\matr{CFB}$&$\surd$&$\times$&$\surd$&$\times$&$\surd$&$\times$&\cite{FMP03}, Cor.~\ref{cor:main}\\
\hline $\matr{S}_{\sigma}^4$&$\surd$&$\surd$&$\surd$&$\surd$&$\surd$&$\surd$& Sec.~\ref{sec:cryptoanalysis}\\
\hline
\end{tabular}

\end {table}
\end{center}

\section{The $\matr{S}_{\sigma}^4(\mathfrak{P})$ cryptosystem }\label{sec:DSPLCDESCRP}

 In this section we define our proposed\footnote{The acronym stands for {\it Switching Self-Synchronized Stream-cipher} } cryptosystem
$\matr{S}_{\sigma}^4(\Gen,\transmitter_{\kappa},\receptor_{\kappa},\mathfrak{P})$ which is based on
basic ideas comming from the design of cryptographic modes of operations and the contributions of G.~Mill\'{e}rioux et.al. 
\cite{PM13} in design and analysis of switching cryptosystems.

 \subsection{System description}

 Here we go through the details of our system's description.

 \begin{itemize}
	\item{{\bf Definition of parameters}
		\begin{itemize}
			\item{The integer $n$ is system's dimension which also determines the length of each block of plaintext or ciphertext.}
	 \item The integer $q$ stands for the size of the finite field $\Field_{_{q}}$.
	 \item The security parameter is $k=n\log q$.
	 \item The key generator $\Gen$: is a probabilistic algorithm that on input $1^k$
	outputs the secret key $\key$. 

 	 \item The family $\mathfrak{P}$: is a family of pseudorandom permutations  on the elements of $GF(q)$ as $ \pi_{_{\key}}:\Field_{_{q}} \outputs \Field_{_{q}} $ indexed by the (secret)  key string $\key$. 
 	 
 	 In this setting the one-to-one function $\BigPi_{_{\key}}:\Field^{^{n}}_{_{q}} \outputs \Field^{^{n}}_{_{q}}$(resp. $ \BigPi_{_{\key}}^{(-1)}:\Field^{^{n}}_{_{q}} \outputs \Field^{^{n}}_{_{q}}$) is defined by the action of the (secret) random permutation $ \pi_{_{\key}} $ (resp. $ \pi^{(-1)}_{_{\key}} $) on each entry of an $n$-vector in $\Field^{^{n}}_{_{q}}$.

	\item{The switching function $\sigma(t):\mathbb{N}\rightarrow [ \ell ]$ (see \cite{PM13}): 
		The switching function must depend on the output of the system, while the
		motivation of such a dependence lies in that the switching rule must also be self-synchronizing. Thus, it must depend on	shared variables and so on the output or a finite sequence of delayed outputs.  
		
%	is the switching function implemented through the output of a linear feedback shift register with a secret initializing vector $LFSR_{0}$.
		
%	The symbol $\matr{S}^4(\mathfrak{P})$ is reserved to refer to the case when the system does not use any switching function.

    }

	\item{Matrices: The matrix $\Wmat$ and for $j \in [ \ell ]$, the matrices  $\Lmat_{j}, \Fmat_{j}$ are secret invertible matrices
				in $\mathcal{M}^{n \times n}$. The set $\{ {\Q}_{j} \ |\ j \in [ \ell ]\}$ is a public nilpotent semigroup of matrices of index $n_{_{0}}$ (see e.g. \cite{PM13}).
				
				On the other hand, for any $j \in [ \ell ]$, the matrices $\Emat_{j},\Bmat_{j}$ in $\mathcal{M}^{n \times n}$ are public invertible matrices,
				where, for any $j \in [ \ell ]$, we define
				$\Amat_{j} \isdef \Emat_{j}\Fmat_{j}^{-1}\Bmat_{j}$, $\Rmat_{j} \isdef \Emat_{j}\Fmat_{j}^{-1}$ and $\Dmat_{j} \isdef \Emat_{j}\Fmat_{j}^{-1}\Lmat_{j}-\Q_{j},$.
			}	
				\item{The matrix $\memM$ is an
				upper triangular public matrix  with zeros on the diagonal in $\mathcal{M}^{m_{_{0}} \times m_{_{0}}}$.
			}				
			\item{At any time $t \in \mathbb{N}$, $\plain(t) \in \Field_{_{q}}$ and $\cipher(t) \in \Field_{_{q}}$ are
				the plaintext and the ciphertext at time $t$. 				
				$$\plainvec(t)=(\plain((t-1)n+1), \plain((t-1)n+2), \cdots, \plain(tn)) ^{T}$$
				and
				$$ \ciphervec(t)=(\cipher((t-1)n+1), \cipher((t-1)n+2), \cdots, \cipher(tn)) ^{T}$$
				are the $t\geq 1$'th blocks of plaintext and ciphertext, respectively.
         	We assume that each block of data is of length $n$
whose symbols are numbered from 1 to $n$, and that the end of encryption is indicated by sending a predefined block $\plainvec(t) = stop$. Also, we assume that if the decryption algorithm does not have
to output a block, it sends, as an acknowledgment, a predefined block $Ack$.
			}
		\end{itemize}
		
	}
	
	\item{{\bf Encryption procedure ($\Enc_{_{\key}}$)}\\

		\begin{figure}[t]
		\centering
		\begin{tikzpicture} [
		auto,
		decision/.style = { diamond, draw=black, thick,
			text width=5em, text badly centered,
			inner sep=1pt },
		block/.style = { rectangle, draw=black, thick,
			text width=14em, text centered,
			minimum height=2em },
		block1/.style = { rectangle, draw=black, thick,
			text width=7em, text centered,
			minimum height=1em },
		line/.style = { draw, thick, ->, shorten >=3pt },
		cl/.style={ultre thick, draw, ellipse, node distance=3cm, minimum height=2em},
		]
		% Define nodes in a matrix
		\matrix [column sep=20mm, row sep=5mm]
		{ &\node [block1] (x1) {\scriptsize{$\begin{array}{l}
					Gen(1^k)\rightarrow \key\\
					\end{array}$}};&\node [block1] (x11) {\scriptsize{$\begin{array}{l}
					Gen(1^k)\rightarrow \key\\
					\end{array}$}};&\\
			\node [text centered] (x4) { };&\node [block] (x2) { \scriptsize{$
					\begin{array}{l}
					(\state(0),\matr{ IV})\ \overset{\$}{\leftarrow} \Field_{_{q}}^{n},\\
					{\rm Initial:} \{
					\pi_{_{\key}},\sigma,
					\{\Lmat_{j}\}_{j=0}^{\ell},\{\Fmat_{j}\}_{j=0}^{\ell}
				\}\\
					\rm {Public}: \{\Bmat_{j}\}_{j=0}^{\ell}, \{\Emat_{j}\}_{j=0}^{\ell}, \memM, \\\ \ \ \ \ \ \ \ \ \{{\Q}_{j}\}_{_{j=0}}^{^{\ell}}.\\ \rm{Compute}: \{\Amat_{j}\}_{j=0}^{\ell}, \{\Dmat_{j}\}_{j=0}^{\ell} .\\ \\
					\transmitter_{\kappa}: \left\lbrace
					\begin{array}
					{l}
					{\rm Input:}\lbrace\plainvec(t)\rbrace \\
					\kernel\:\transmitter(\plainvec(t))\\
					{\rm Output:} (\{\ciphervec(t)\}, \matr{ IV})\\
					\end{array}
					\right. \\
					\end{array}$}};& \node [block] (y2) {\scriptsize{$
					\begin{array}{l}
					(\hstate(1),\matr{ IV})\ \overset{\$}{\leftarrow} \Field_{_{q}}^{n},\\
					{\rm Initial:} \{
					\pi_{_{\key}},\sigma,
					\{\Lmat_{j}\}_{j=0}^{\ell},\{\Fmat_{j}\}_{j=0}^{\ell}\}\\
					\rm{Public}:\{\Bmat_{j}\}_{j=0}^{\ell}, \{\Emat_{j}\}_{j=0}^{\ell}, \memM, \\\ \ \ \ \ \ \ \ \ \{{\Q}_{j}\}_{_{j=0}}^{^{\ell}}.\\\rm{Compute}: \{\Amat_{j}\}_{j=0}^{\ell}, \{\Dmat_{j}\}_{j=0}^{\ell}.\\ \\
					\receptor_{\kappa}: \left\lbrace
					\begin{array}
					{l}
					{\rm Input:}(\{\ciphervec(t)\}, \matr{IV}) \\
					
					\kernel\receptor (\ciphervec(t))\\
					{\rm Output:} \{\widehat{{\plainvec}}(t)\}\\
					\end{array}
					\right. \\
					\end{array}$}};&\node [text centered] (x5) { };&\\ };
		%
		% legend for subprocedures
		\begin{scope} [every path/.style=line]
		\path (x11) -- node [] {} (y2);
		
		\path (x1) -- node [] {} (x2);
		\path (x4) -- node [] {$\{\plainvec(t)\}$} (x2);
		\path (y2) -- node [] {$\{\widehat{\plainvec}(t)\}$} (x5);
		\path (x2) -- node [] { $\{\ciphervec(t)\}, \matr{ IV}$} (y2) ;
		\end{scope}
		\end{tikzpicture}

		\caption{ The $\matr{S}_{\sigma}^4(\mathfrak{P})$ cryptosystem.
		}\label{fig:DSPLC1}
	\end{figure}
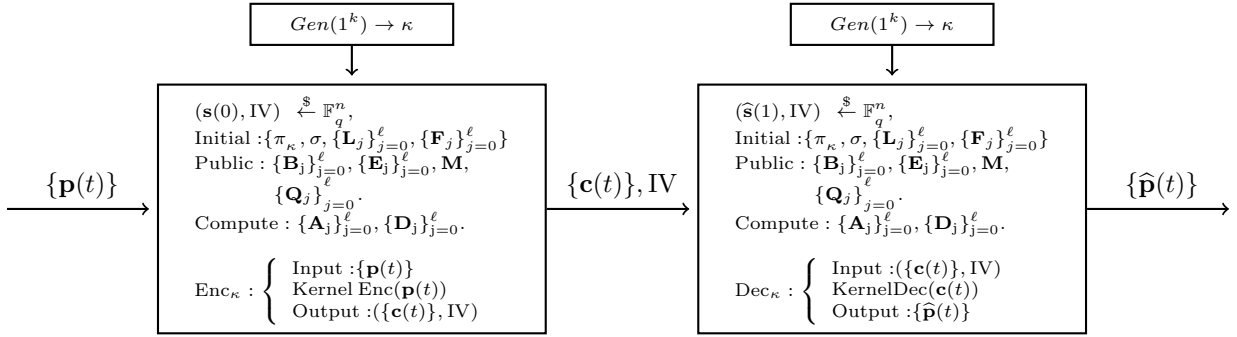
		\begin{itemize}
			\item The transmitter chooses a vector
			$\state(0)=(s(0)_{1} \cdots, s(0)_{n}) ^{T}$
			at random in $\Field_{_{q}}^{n}$.
			\item The transmitter chooses two random vectors $\memvec(0),\ciphervec(0) \in \Field_{_{q}}^n$ and forms the initial-value vector
			$\matr{ IV} \isdef (\memvec(0),\ciphervec(0))$ with
			$$\ciphervec(0)=(\cipher(1-n), \cdots, \cipher(0)) ^{T},$$
			$$\memvec(0)=(m(0)_{1} \cdots, m(0)_{m_{_{0}}}) ^{T}$$
			and transmits this vector over the public channel.
		
			\item The general description of the encryption procedure is as follows (see Figures~\ref{fig:DSPLC1} and \ref{fig:DSPLC2}).
			{\scriptsize		
				\begin{flushleft}
					$
					\transmitter_{\kappa}: \left\lbrace
					\begin{array}
					{ll}
					\rm Input:&\lbrace\plainvec(t), t\geq 1\rbrace \\
					{\rm Initial:}&\pi_{_{\key}},\sigma(t), \state(0),\matr{IV}, \lbrace \Bmat_{j}\rbrace, \lbrace \Emat_{j}\rbrace, \lbrace \Fmat_{j}\rbrace, \lbrace \Lmat_{j}\rbrace, \Wmat, \memM\\
					\kernel\:\transmitter(\plainvec(t)), t\geq 1\\
					
					{\rm Output:} &(\{\ciphervec(t), t\geq 1\}, \matr{ IV}).\\
					\end{array}
					\right.
					$
				\end{flushleft}
				%%%%%%%%%%%%%%%%%%%%%%%%%%%%%%%%%%%%%%%%%
				\[
				\kernel\:\transmitter(\plainvec(t)):\left\lbrace
				\begin{array}
				{ll}
				\matr{IV}=(\memvec(0),\mathbf{c}(0)),\ \ \state(0) \overset{\$}{\leftarrow} \Field_{_{q}}^{n},\\
				If \ t=0 \ then&\\	
				\ \ \left\lbrace
				\begin{array}	 {l}	
				\state(1) =\state(0), \\
				\memvec(1)=\memvec(0),\\
				\mathbf{c}(1) = \ciphervec(0),\\
				\end{array}
				\right.\\
				\ \ {\rm Output:} \bot.\\
				If \ t=1 \ then \\
				\ \ \left\lbrace
				\begin{array} {l}
				\mathrm{update\: \transmitter},\\
				\ciphervec(2)=\encryptionfun_{_{\key}}(\keystreamvec(1), \plainvec(1)),\\
				\end{array}
				\right.\\
				\ \ {\rm Output:} \ciphervec(1)\\
				
				If\ \plainvec(t)=stop \ then \\
				\ \ {\rm Output:} \ciphervec(t)\\
				Else \\
				\ \ \left\lbrace
				\begin{array}
				{l}\mathrm{update\: \transmitter},\\
				\ciphervec(t+1)=\encryptionfun_{_{\key}}(\keystreamvec(t), \plainvec(t)),\\
				\end{array}
				\right.\\
				\ \ \ \ {\rm Output:} \ciphervec(t).\\
				\end{array}
				\right.
				\]
				%%%%%%%%%%%%%%%%%%%%%%%%%%%%%%%%%%%%%%%%%%%%%
							
			}		
			where the details of the functions in update procedure are as follows,
			{\scriptsize
				\begin{equation}\label{lab:DSPLC}
					\begin{array}
						{llll}
						\mathrm{update\: \transmitter}:\left\lbrace
						\begin{array}{llll}
										\rm Initial &: j=\sigma(t),\\
							\state(t+1)&=\Wmat\memvec(t)+\Dmat_{j}\state(t)+ \Amat_{j}\BigPi_{_{\key}}(\state(t))+\Emat_{j}\BigPi_{_{\key}}(\plainvec(t)),\\
							\memvec(t+1)&=\memM\memvec(t)+\ciphervec(t),\\
							\keystreamvec(t)&=\Lmat_{j}\state(t)+ \Bmat_{j}\BigPi_{_{\key}}(\state(t)),\\
						\end{array}
						\right.
					\end{array}
				\end{equation}
				\begin{equation}\label{lab:DSPLC1}
				{\rm Output} \: \transmitter:\ciphervec(t+1)=\keystreamvec(t)+\Fmat_{j}\BigPi_{_{\key}}(\plainvec(t)).\\
				\end{equation}
			}	
		\end{itemize}
		
		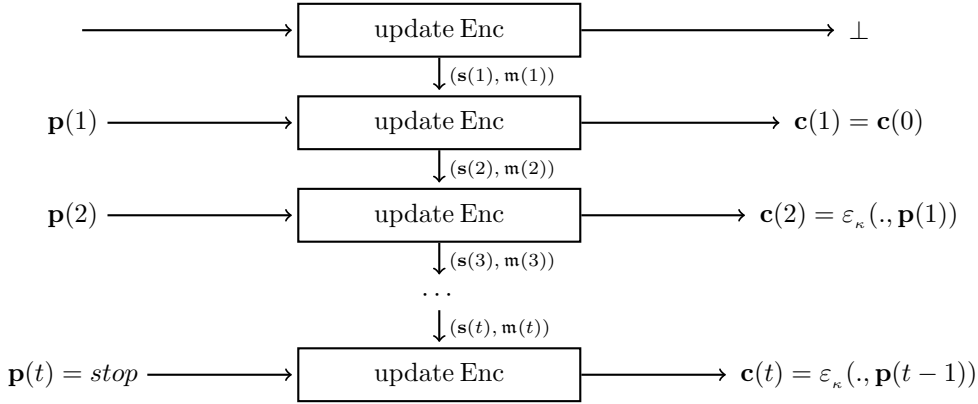
\begin{figure}[t]
			\centering
			\begin{tikzpicture} [
			auto,
			decision/.style = { diamond, draw=black, thick,
				text width=5em, text badly centered,
				inner sep=1pt },
			block/.style = { rectangle, draw=black, thick,
				text width=10em, text centered,
				minimum height=2em },
			line/.style = { draw, thick, ->, shorten >=2pt },
			]
			
			% Define nodes in a matrix
			\matrix [column sep=20mm, row sep=5mm]
			{ \node [text centered] (x1) { };&\node [block] (p1) {$\mathrm{update\: \transmitter}$}; & \node [text centered] (y1) {$\bot$};\\
				\node [text centered] (x2) {$\plainvec(1)$ };&\node [block] (p2) {$\mathrm{update\: \transmitter}$}; & \node [text centered] (y2) {$\ciphervec(1)=\ciphervec(0)$};\\
				\node [text centered] (x4) {$\plainvec(2)$ };&\node [block] (p5) {$\mathrm{update\: \transmitter}$}; & \node [text centered] (y4) {$\ciphervec(2)=\encryptionfun_{_{\key}}(., \plainvec(1))$};\\
				&\node [text centered] (p4) {$\cdots$}; & \\
				\node [text centered] (x3) { $\plainvec(t)=stop$};&\node [block] (p3) {$\mathrm{update\: \transmitter}$}; & \node [text centered] (y3) {$\ciphervec(t)=\encryptionfun_{_{\key}}(., \plainvec(t-1))$};&\\
			};
			%
			% legend for subprocedures
			\begin{scope} [every path/.style=line]
			\path (x1) -- node [] {} (p1);
			\path (p1) -- node [] {} (y1);
			\path (x2) -- node [] {} (p2);
			\path (x4) -- node [] {} (p5);
			\path (p5) -- node [] {} (y4);
			\path (p2) -- node [] {} (y2);
			\path (x3) -- node [] {} (p3);
			\path (p3) -- node [] {} (y3);
			\path (p1) -- node [] { \scriptsize{$(\state(1),\memvec(1))$}} (p2) ;
			\path (p2) -- node [] { \scriptsize{$(\state(2),\memvec(2))$}} (p5) ;
			\path (p5) -- node [] { \scriptsize{$(\state(3),\memvec(3))$}} (p4) ;
			\path (p4) -- node [] { \scriptsize{$(\state(t),\memvec(t))$}} (p3) ;
			\end{scope}
			\end{tikzpicture}
			\caption{ The general scheme of $\matr{S}_{\sigma}^4(\mathfrak{P})$ cryptosystem steps.
			}\label{fig:DSPLC2}
		\end{figure}

		\item{{\bf Decryption procedure ($\Dec_{\theta}$)}\\
			
			The input to the receiver is the ciphertext stream and the initial-vector $(\{\cipher(t)\}, \matr{ IV}).$
			In $\matr{S}_{\sigma}^4(\mathfrak{P})$ we have $t_{_{c}}=1$ as the cipher delay and the receiver operates as an unknown input observer as follows.
						
			\begin{itemize}
				\item The receiver selects a vector $\hstate(1)=(\hstate(1)_{1} \cdots,\hstate(1)_{n}) ^{T}$
				at random in $\Field_{_{q}}^{n}$.
				\item The general setup of the receiver procedure is as follows,
				{\scriptsize
						\[
						\receptor_{\kappa}: \left\lbrace
						\begin{array}
						{ll}
					\rm	Input:&(\{\ciphervec(t) \ t\geq1\}, \matr{IV})\\
					\rm Initial	:&\pi_{_{\key}},\sigma(t), \hstate(1), \matr{ IV}, \lbrace \Bmat_{j}\rbrace, \lbrace \Emat_{j}\rbrace, \lbrace \Fmat_{j}\rbrace, \lbrace \Lmat_{j}\rbrace, \Wmat, \memM\\
						\kernel\receptor (\ciphervec(t)), t\geq 1 &\\
						\ {\rm Output:}& \{\widehat{{\plainvec}}(t)\ t\geq 2\}.
						\end{array}
						\right.
					\]
					\[
					\kernel\receptor (\ciphervec(t)):\left\lbrace
					\begin{array}
					{ll}
					\matr{IV}=(\memvec(0),\mathbf{c}(0)),	\ \ \hstate(1) \overset{\$}{\leftarrow} \Field_{_{q}}^{n},\\
					if\ t=0\ then \\
					\ \ {\rm Output:} \ Ack\\
					if\ t=1\ then \\
					\ \ \left\lbrace
					\begin{array}	{l}
					\hstate(2)=\hstate(1),\\
					\widehat{\memvec}(2)=\memvec(0),\\
					\end{array}
					\right.\\
					\ \ {\rm Output:} \ Ack\\
					Else \\
					\ \ \left\lbrace
					\begin{array}	{l}\mathrm{update\: \receptor},\\
					\hplainvec(t)=\decryptionfun_{_{\key}}(\hkeystreamvec(t), \ciphervec(t), \lbrace \Fmat^{-1}_{j}\rbrace ),\\
					
					\end{array}
					\right.\\
					\ {\rm Output:} \ \hplainvec(t).\\
					\end{array}
					\right.
					\]

 				}
				\ \\ \ \\
				where the details of the functions in update procedure are as follows,
				{\scriptsize
					\begin{equation}\label{lab:DSPLC3}
						\begin{array}
							{llll}
							\mathrm{update\: \receptor}:\left\lbrace
							\begin{array}{llll}
											\rm 	Initial &: j=\sigma(t-1),\\
								\hkeystreamvec(t)&=\Lmat_{j}\hstate(t)+ \Bmat_{j}\BigPi_{_{\key}}(\hstate(t)),\\
								\hstate(t+1)&=\Wmat\widehat{\memvec}(t)+\Dmat_{j}\hstate(t)+ \Amat_{j}\BigPi_{_{\key}}(\hstate(t))+\Rmat_{j}(\ciphervec(t)-\hkeystreamvec(t)),\\
								\widehat{\memvec}(t+1)&=\memM\widehat{\memvec}(t)+ \ciphervec(t-1),\\
							\end{array}
							\right.
						\end{array}
					\end{equation}
					
					\begin{equation}\label{lab:DSPLC4}
						Output\ \receptor:\hplainvec(t)=\BigPi_{_{\key}}^{(-1)}(\Fmat_{j}^{-1}(\ciphervec(t)-\hkeystreamvec(t))).\\
					\end{equation}
				}
			\end{itemize}
			
		}
		
	} \end{itemize}

\subsection{Verification of system properties}

The following lemma describes the main properties of $\matr{S}_{\sigma}^4(\mathfrak{P})$.  

\begin{lem}
The $\matr{S}_{\sigma}^4(\mathfrak{P})$ cryptosystem is {\it finite-time self-synchronized} with $t_{_{s}}=n_{_{0}}$, delay $1$ and dummy ciphertext symbols $t_{_{c}}\leq t_{_{s}}+m_{_{0}}$. Also, $\transmitter_{\kappa}$ is a flat dynamical systems.
\end{lem}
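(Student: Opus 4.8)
The plan is to establish the four assertions in turn, concentrating almost all of the effort on the self-synchronization claim, which I would prove by an error-propagation argument analogous to the $\matr{MDCBC}$ and $\matr{MCFB}$ lemmas but now carried out for the full switched, nonlinear update. The delay $d=1$ is immediate from the structure of the update and output rules \ref{lab:DSPLC}--\ref{lab:DSPLC1}: the block encrypting $\plainvec(t)$ is produced as $\ciphervec(t+1)$ and emitted one clock later.

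For synchronization I would introduce two error signals, the memory error $\widehat{\memvec}(t+1)-\memvec(t)$ and the state error $\ek \isdef \hstate(t+1)-\state(t)$, and show each obeys a nilpotent recursion. Subtracting the memory updates in \ref{lab:DSPLC} and \ref{lab:DSPLC3} gives $\widehat{\memvec}(t+2)-\memvec(t+1)=\memM\,(\widehat{\memvec}(t+1)-\memvec(t))$; since the initialization forces $\widehat{\memvec}(2)=\memvec(0)=\memvec(1)$ and $\memM$ is nilpotent of index $m_{_{0}}$, the memory error vanishes from the outset (and in any case after at most $m_{_{0}}$ steps). For the state error I would substitute the receiver update \ref{lab:DSPLC3} into $\hstate(t+2)-\state(t+1)$, using that the received block satisfies, with $j=\sigma(t)$,
\[
\ciphervec(t+1)-\hkeystreamvec(t+1)=-\Lmat_{j}\ek-\Bmat_{j}\bigl(\BigPi_{_{\key}}(\hstate(t+1))-\BigPi_{_{\key}}(\state(t))\bigr)+\Fmat_{j}\BigPi_{_{\key}}(\plainvec(t)).
\]
The decisive cancellation is then forced by the design identities $\Rmat_{j}=\Emat_{j}\Fmat_{j}^{-1}$, $\Amat_{j}=\Rmat_{j}\Bmat_{j}$ and $\Dmat_{j}=\Rmat_{j}\Lmat_{j}-\Q_{j}$: the product $\Rmat_{j}\Fmat_{j}=\Emat_{j}$ annihilates the plaintext injection, $\Rmat_{j}\Bmat_{j}=\Amat_{j}$ annihilates the nonlinear $\BigPi_{_{\key}}$ difference, and $\Rmat_{j}\Lmat_{j}=\Dmat_{j}+\Q_{j}$ collapses the linear part, leaving exactly
\[
\hstate(t+2)-\state(t+1)=\Wmat\bigl(\widehat{\memvec}(t+1)-\memvec(t)\bigr)-\Q_{\sigma(t)}\,\ek .
\]
Once the memory error is zero this reduces to $\ke=-\Q_{\sigma(t)}\ek$, so $\ek$ equals a product of switched nilpotent matrices $\Q_{\sigma(\cdot)}$ applied to the initial error; by the index-$n_{_{0}}$ nilpotency of the semigroup $\{\Q_{j}\}$ any such product is zero, yielding $\hstate(t+1)=\state(t)$ for all $t\geq t_{_{s}}=n_{_{0}}$. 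I expect this algebraic collapse to be the main obstacle: one must check that \emph{every} plaintext- and $\BigPi_{_{\key}}$-dependent term drops out, which relies on the transmitter index $\sigma(t)$ and the receiver index $\sigma((t+1)-1)=\sigma(t)$ coinciding, i.e. on the self-synchronizing nature of $\sigma$, so that the receiver acts as a genuine unknown-input observer.

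Correct decryption then follows from Lemma~\ref{lem:flatness}(i): once $\hstate(t+1)=\state(t)$ the same substitution gives $\ciphervec(t)-\hkeystreamvec(t)=\Fmat_{j}\BigPi_{_{\key}}(\plainvec(t-1))$, whence the output rule \ref{lab:DSPLC4} returns $\hplainvec(t)=\BigPi_{_{\key}}^{(-1)}(\BigPi_{_{\key}}(\plainvec(t-1)))=\plainvec(t-1)$.

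Finally, for flatness I would exhibit the functions $F_{_{\theta}},G_{_{\theta}}$ of \ref{eq:flat}. Unrolling $\memvec(t)=\memM\memvec(t-1)+\ciphervec(t-1)$ with $\memM^{m_{_{0}}}=\mathbf{{\texttt{0}}}$ expresses $\memvec(t)$ through the last $m_{_{0}}$ ciphertexts; substituting this into the receiver recursion and using that, after synchronization, $\hstate(t+1)=\state(t)$ is a deterministic function of the received blocks whose dependence on the random $\hstate(1)$ is washed out after $n_{_{0}}$ steps, shows that $\state(t)=F_{_{\theta}}(\ciphervec(t-l),\dots,\ciphervec(t-l'))$ with a window bounded by the state-synchronization depth $n_{_{0}}$ together with the memory depth $m_{_{0}}$, i.e. $t_{_{c}}=|l'-l+1|\leq t_{_{s}}+m_{_{0}}$. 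Inverting the output relation \ref{lab:DSPLC1} then gives $\plainvec(t)=G_{_{\theta}}(\ciphervec(t-l),\dots,\ciphervec(t-l'))$, which establishes that $\transmitter_{\kappa}$ is flat.
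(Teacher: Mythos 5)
Your proposal is correct and follows essentially the same route as the paper's proof: the identities $\Rmat_{j}\Fmat_{j}=\Emat_{j}$, $\Rmat_{j}\Bmat_{j}=\Amat_{j}$ and $\Dmat_{j}-\Rmat_{j}\Lmat_{j}=-\Q_{j}$ collapse the state error to a switched nilpotent recursion (the paper's claim (i)), correct decryption and $d=1$ follow by the same substitution (claim (ii)), and flatness with window $t_{_{s}}+m_{_{0}}$ comes from unrolling the affine recursions and killing the initial-state term via the nilpotency of $\{\Q_{j}\}$ and $\memM$ (claims (iii)--(iv)). The only differences are cosmetic: you make the memory-error recursion explicit where the paper silently uses $\widehat{\memvec}(t+1)=\memvec(t)$, you track the sign ($-\Q_{j}$ rather than the paper's $\Q_{j}$, immaterial for nilpotency), and you obtain the transmitter's flatness from the receiver's unrolled recursion plus synchronization, whereas the paper runs two separate inductions, one for $\state(t+1)$ and one for $\hstate(t+2)$.
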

\begin{proof}{
Clearly the lemma follows from the following four claims.

 		\begin{itemize}
			\item[{\rm i)}] The algorithm $\receptor_{\key}$ operates as an unknown input observer for $\transmitter_{\key}$ in a self-synchronized setup, i.e.
$$ \forall t \geq t_{_{s}}, \ \ \hstate(t+1)=\state(t).$$
			\item[{\rm ii)}]For any $ t \geq t_{_{s}} $ we have $\hplainvec(t+1)=\plainvec(t)$ and
system's delay is $d=1$.
			\item[{\rm iii)}]For any $ t \geq t_{_{s}}$, we have
			{\scriptsize
				\begin{equation}\label{lab:explicit}
					\begin{array}{ll}
						\state(t+1)=(\prod_{i=1}^{t}\Q_{\sigma(i)})\state(1)\\
						+\sum_{h=1}^{h=t}\left[(\prod_{i=h+1}^{t}\Q_{\sigma(i)})\Wmat\memvec(h)+(\prod_{i=h+1}^{t}\Q_{\sigma(i)})\Emat_{\sigma(h)}\Fmat_{\sigma(h)}^{-1} (\ciphervec(h)) \right].
					\end{array}
				\end{equation}
			}
	\item[{\rm iv)}] For all $ t\geq t_{_{s}}$, the vector
$ \state(t+1) $ and 	$ \hstate(t+2) $ depends on a finite number of previous ciphertexts, i.e.
{\scriptsize
	\begin{equation}
	\hstate(t+2)=\state(t+1)=\left\lbrace
	\begin{array}{ll}
	F_{_{\key}}(\ciphervec(t-t_{_{s}}-m_{_{0}}), \cdots, \ciphervec(t)) & t > m_{_{0}}\\
	F_{_{\key}}(m(0)_{t-t_{_{s}}} \cdots, m(0)_{m_{_{0}}},\ciphervec(1), \cdots, \ciphervec(t)) & t \leq m_{_{0}}.\\
	
	\end{array}\right.
	\end{equation}
	
}
			
		\end{itemize}
		
Now, we prove each on of these claims as follows.

 For Part~(i) note that,

 			{\scriptsize
				\begin{equation}\label{lab:DSPLC7}
					\begin{array}{ll}
						\ke=&\hstate((t+1)+1)-\state(t+1)\\&=
						\Wmat\widehat{\memvec}(t+1)+\Dmat_{j}\hstate(t+1)+ \Amat_{j}\BigPi_{_{\key}}(\hstate(t+1))+\Rmat_{j}(\ciphervec(t+1)-\hkeystreamvec(t+1))\\&
						\:-(\Wmat\memvec(t)+\Dmat_{j}\state(t)+ \Amat_{j}\BigPi_{_{\key}}(\state(t))+\Emat_{j}\BigPi_{_{\key}}(\plainvec(t)))\\&
						=\Dmat_{j}\ek+ \Amat_{j}\BigPi_{_{\key}}(\hstate(t+1))\\&
						\:+\Rmat_{j}(\Lmat_{j}\state(t)+ \Bmat_{j}\BigPi_{_{\key}}(\state(t))+\Fmat_{j}\BigPi_{_{\key}}(\plainvec(t))-\Lmat_{j}\hstate(t+1)- \Bmat_{j}\BigPi_{_{\key}}(\hstate(t+1)))\\&
						\:- \Amat_{j}\BigPi_{_{\key}}(\state(t))-\Emat_{j}\BigPi_{_{\key}}(\plainvec(t))\\&
						=(\Dmat_{j}-\Rmat_{j}\Lmat_{j})\ek=\Q_{j}\ek.
					\end{array}
				\end{equation}
			}
			Hence, since $\{ {\Q}_{j} \ |\ j \in [ \ell ]\}$ is a nilpotent semigroup of matrices of index $n_{_{0}}$,
			 $$ \forall t \geq t_{_{s}}=n_{_{0}}, \ \ \hstate(t+1)=\state(t).$$
			
Part~(ii) is a direct consequence of Part~(i), i.e.,
			{\scriptsize
				\begin{equation}
					\begin{array}{ll}
						\hplainvec(t+1)&=\BigPi_{_{\key}}^{(-1)}(\Fmat_{j}^{-1}(\ciphervec(t+1)-\hkeystreamvec(t+1)))\\
						&=\Fmat_{j}^{-1} \BigPi_{_{\key}}^{-1}(\ciphervec(t+1)-\Bmat_{j}\BigPi_{_{\key}}(\hstate(t+1))-\Lmat_{j}\hstate(t+1))\\
						&=\Fmat_{j}^{-1} \BigPi_{_{\key}}^{-1}(\Bmat_{j}\BigPi_{_{\key}}(\state(t))+\Lmat_{j}\state(t)+\Fmat_{j}\BigPi_{_{\key}}(\plainvec(t))-\Bmat_{j}\BigPi_{_{\key}}(\hstate(t+1))-\Lmat_{j}\hstate(t+1))\\
						&=\BigPi_{_{\key}}^{(-1)}(\Fmat_{j}^{-1}(\Fmat_{j}\BigPi_{_{\key}}(\plainvec(t))))=\plainvec(t).
					\end{array}
				\end{equation}	
			}
On the other hand, since $\ciphervec(t+1)=\encryptionfun_{_{\key}}(\keystreamvec(t), \plainvec(t))$ we deduce that the  delay is	$d=1$.
			
			For Part~(iii) we use induction on time $ t\geq 2$.
			\begin{itemize}
				\item {\it For  $t=2$}:
			
			Considering  $\ciphervec(t+1)=\keystreamvec(t)+\Fmat_{j}\BigPi_{_{\key}}(\plainvec(t))$, we have,
				{\scriptsize
					\begin{equation}\label{lab:2}
						\begin{array}{ll}
							\state(2)=&\Wmat\memvec(1)+\Dmat_{\sigma(1)}\state(1)+ \Amat_{\sigma(1)}\BigPi_{_{\key}}(\state(1))+\Emat_{\sigma(1)}\BigPi_{_{\key}}(\plainvec(1))\\&\Wmat\memvec(1)+\Dmat_{\sigma(1)}\state(1)+ \Amat_{\sigma(1)}\BigPi_{_{\key}}(\state(1))+\Emat_{\sigma(1)}\Fmat_{\sigma(1)}^{-1}\\&
							(\ciphervec(2)-\Bmat_{\sigma(1)}\BigPi_{_{\key}}(\state(1))-\Lmat_{\sigma(1)}\state(1))=\\&	(\Dmat_{\sigma(1)}-\Emat_{\sigma(1)}\Fmat_{\sigma(1)}^{-1}\Lmat_{\sigma(1)})\state(1)+\Wmat\memvec(1)+\Emat_{\sigma(1)}\Fmat_{\sigma(1)}^{-1} \ciphervec(2).
						\end{array}
					\end{equation}
				}
				
				\item {\it The induction step for time $t$}:\\
				Assuming (\ref{lab:explicit}); that is,
				{\scriptsize
					
					\begin{equation}\label{lab:explicit2}
						\begin{array}{l}
							\state(t)=(\prod_{i=1}^{t-1}\Q_{\sigma(i)})\state(1)+\\\sum_{h=1}^{h=t-1}\left[(\prod_{i=h+1}^{t-1}\Q_{\sigma(i)})\Wmat\memvec(h)+(\prod_{i=h+1}^{t-1}\Q_{\sigma(i)})\Emat_{\sigma(h)}\Fmat_{\sigma(h)}^{-1} \ciphervec(h+1)\right].
						\end{array}
					\end{equation}
					
				}
				then using equations (\ref{lab:DSPLC}) and (\ref{lab:DSPLC2}) for 
				$ \state(t+1) $ we may conclude that,
				{\scriptsize
					
					\begin{equation}\label{lab:explicit3}{
	\begin{array}{lllll}
					\state(t+1)&=\Wmat\memvec(t)+\Amat_{\sigma(t)}\BigPi_{_{\key}}(\state(t))+\Emat_{\sigma(t)}\BigPi_{_{\key}}(\plainvec(t))+\Dmat_{\sigma(t)}\state(t)\\&
					=\Wmat\memvec(t)+\Amat_{\sigma(t)}\BigPi_{_{\key}}(\state(t))+\Emat_{\sigma(t)}\BigPi_{_{\key}}(\plainvec(t))+\Dmat_{\sigma(t)}((\prod_{i=1}^{t-1}\Q_{\sigma(i)})\state(1)\\&
					+\sum_{h=1}^{h=t-1}\left[(\prod_{i=h+1}^{t-1}\Q_{\sigma(i)})\Wmat\memvec(h)+(\prod_{i=h+1}^{t-1}\Q_{\sigma(i)})\Emat_{\sigma(h)}\Fmat_{\sigma(h)}^{-1}\ciphervec(h+1)\right])\\&
					=(\prod_{i=1}^{t}\Q_{\sigma(i)})\state(1)+\\&
					\sum_{h=1}^{h=t}\left[(\prod_{i=h+1}^{t}\Q_{\sigma(i)})\Wmat\memvec(h)+(\prod_{i=h+1}^{t}\Q_{\sigma(i)})\Emat_{\sigma(h)}\Fmat_{\sigma(h)}^{-1} \ciphervec(h+1)\right].
					\end{array}
	}\end{equation}
					
				}
				Thus, (\ref{lab:explicit}) holds for time $(t+1) $, and the proof of the induction step is complete.
				
			\end{itemize}
			For Part~(iv) note that $\{ {\Q}_{j} \ |\ j \in [ \ell ]\}$
		is a nilpotent semigroup of matrices of index $t_{_{s}}=n_{_{0}}$, and consequently, for $t \geq t_{_{s}}$ we have 			$(\prod_{i=1}^{t}\Q_{\sigma(i)})\state(1)=0$. Hence, using (\ref{lab:explicit3}), for any $ t\geq t_{_{s}}$ we have
		{\scriptsize
			\begin{equation}\label{eq:s}
			\begin{array}{ll}
			\state(t+1)=\sum_{h=t-t_{_{s}}}^{h=t}\left[(\prod_{i=h+1}^{t}\Q_{\sigma(i)})\Wmat\memvec(h)+
			(\prod_{i=h+1}^{t}\Q_{\sigma(i)})\Emat_{\sigma(h)}\Fmat_{\sigma(h)}^{-1}\ciphervec(h) \right].
			\end{array}
			\end{equation}
		}
		Also, since for $h>n$, and $\memvec(h)$ is a linear function of $\ciphervec(h-1-m_{_{0}}), \cdots, \ciphervec(h-1)$ and for  $h\leq  m_{_{0}}$, we know that $\memvec(h)$  is a linear function of $\ciphervec(1), \cdots, \ciphervec(h-1)$ and $m(0)_{h} \cdots, m(0)_{m_{_{0}}} $ we may conclude that for all
		$ t \geq t_{_{s}} $ and ,
		
		\begin{equation}
		\state(t+1)=\left\lbrace
		\begin{array}{ll}
		F_{_{\key}}(\ciphervec(t-t_{_{s}}-m_{_{0}}), \cdots, \ciphervec(t)) & t > m_{_{0}}\\
		F_{_{\key}}(m(0)_{t-t_{_{s}}} \cdots, m(0)_{m_{_{0}}},\ciphervec(1), \cdots, \ciphervec(t)) & t \leq m_{_{0}}.\\
		
		\end{array}\right.
		\end{equation}
		
		The amount of memory needed at most to save the necessary ciphertexts from the past is $$t-(t-t_{_{s}}-m_{_{0}})=t_{_{s}}+m_{_{0}},$$ and consequently, $\transmitter_{\kappa}$ is flat.
		
		Now, we prove that for $ t\geq t_{_{s}}$	
		\begin{equation}
		\hstate(t+2)=\left\lbrace
		\begin{array}{ll}
		F_{_{\key}}(\ciphervec(t-t_{_{s}}-m_{_{0}}), \cdots, \ciphervec(t)) & t > m_{_{0}},\\
		F_{_{\key}}(m(0)_{t-t_{_{s}}} \cdots, m(0)_{m_{_{0}}},\ciphervec(1), \cdots, \ciphervec(t)) & t \leq m_{_{0}}.\\
		
		\end{array}\right.
		\end{equation}
		We use induction on time $ t\geq 3$.
		\begin{itemize}
			\item {\it For $t=3$}:
			
			We have,
			{\scriptsize
				\begin{equation}\label{lab:22}
				\begin{array}{ll}
				\hstate(3)=&\Wmat\widehat{\memvec}(2)+\Dmat_{\sigma(1)}\hstate(2)+ \Amat_{\sigma(1)}\BigPi_{_{\key}}(\hstate(2))+\Rmat_{\sigma(1)}(\ciphervec(2)-\hkeystreamvec(2))\\&=	(\Dmat_{\sigma(1)}-\Emat_{\sigma(1)}\Fmat_{\sigma(1)}^{-1}\Lmat_{\sigma(1)})\hstate(2)+\Wmat\widehat{\memvec}(2)+\Rmat_{\sigma(1)} \ciphervec(2).
				\end{array}
				\end{equation}
			}
			
			\item {\it The induction step for time $t$}:\\
			Assuming (\ref{lab:explicit}); that is,
			{\scriptsize
				
				\begin{equation}\label{lab:explicit22}
				\begin{array}{l}
				\hstate(t+1)=(\prod_{i=1}^{t-1}\Q_{\sigma(i)})\hstate(2)+\\\sum_{h=1}^{h=t-1}\left[(\prod_{i=h+1}^{t-1}\Q_{\sigma(i)})\Wmat\widehat{\memvec}(h+1)+(\prod_{i=h+1}^{t-1}\Q_{\sigma(i)})\Rmat_{\sigma(h)} \ciphervec(h+1)\right].
				\end{array}
				\end{equation}
				
			}
			then using equations (\ref{lab:DSPLC}) and (\ref{lab:DSPLC2}) for
			$ \state(t+1) $ we may conclude that,
			{\scriptsize
				
				\begin{equation}\label{lab:explicit33}{
					\begin{array}{lllll}
					\hstate(t+2)&=\Wmat\widehat{\memvec}(t+1)+\Dmat_{\sigma(t-1)}\hstate(t+1)+\Amat_{\sigma(t-1)}\BigPi_{_{\key}}(\hstate(t+1))+\Rmat_{\sigma(t-1)}(\ciphervec(t+1)-\hkeystreamvec(t+1))
					\\&=(\prod_{i=1}^{t}\Q_{\sigma(i)})\hstate(2)+
					\sum_{h=1}^{h=t}\left[(\prod_{i=h+1}^{t}\Q_{\sigma(i)})\Wmat\widehat{\memvec}(h+1)+(\prod_{i=h+1}^{t}\Q_{\sigma(i)})\Rmat_{\sigma(h)} \ciphervec(h+1)\right]
					\end{array}
				}\end{equation}
				
			}		
		\end{itemize}
		
		Since $\{ {\Q}_{j} \ |\ j \in [ \ell ]\}$
		is a nilpotent semigroup of matrices of index $t_{_{s}}=n_{_{0}}$, and consequently, for $t \geq t_{_{s}}$ we have 	
		$(\prod_{i=1}^{t}\Q_{\sigma(i)})\hstate(2)=0$. Hence, using (\ref{lab:explicit3}), for any $ t\geq t_{_{s}}$ we have
		{\scriptsize
			\begin{equation}{\label{eq:shat}
				\hstate(t+2)=\sum_{h=t-t_{_{s}}}^{h=t}\left[(\prod_{i=h+1}^{t}\Q_{\sigma(i)})\Wmat\widehat{\memvec}(h+1)+
				(\prod_{i=h+1}^{t}\Q_{\sigma(i)})\Emat_{\sigma(h)}\Fmat_{\sigma(h)}^{-1}\ciphervec(h) \right].
			}\end{equation}
		}
		
		Also, since $\widehat{\memvec}(h+1)={\memvec}(h)$ then
		
		\begin{equation}
		\hstate(t+2)=\left\lbrace
		\begin{array}{ll}
		F_{_{\key}}(\ciphervec(t-t_{_{s}}-m_{_{0}}), \cdots, \ciphervec(t)) & t > m_{_{0}},\\
		F_{_{\key}}(m(0)_{t-t_{_{s}}} \cdots, m(0)_{m_{_{0}}},\ciphervec(1), \cdots, \ciphervec(t)) & t \leq m_{_{0}},\\
		
		\end{array}\right.
		\end{equation}
		as we wanted to show.
}\end{proof}

		\begin{exm}{\label{exm:sync}{\bf An illustrative example}\\

		We consider a 3-dimensional $\matr{S}^4_{\sigma}$ cryptosystem with following matrices (see Equations~\ref{lab:DSPLC} and \ref{lab:DSPLC1})
		with $\state(t)\in\f^{3}_{7}$, $\plainvec(t)\in\f^{3}_{7}$ and $\ciphervec(t)\in\f^{3}_{7}$ and applying a simple swiching function
		$ \sigma(t)=t $\ \texttt{mod }$ 2$;
		{\scriptsize

			\[
	\mathbf{Q_1}=\left(\begin{array}{*3{c}}
			6&1&0\\
			6&1&0\\
			0&0&0
			\end{array}\right)
			\mathbf{D_1}=\left(\begin{array}{*3{c}}
			6&0&2\\
			6&3&3\\
			0&3&2
			\end{array}\right)
			\mathbf{A_1}=\left(\begin{array}{*3{c}}
			0&2&5\\
			6&1&1\\
			1&0&1
			\end{array}\right)
			\mathbf{E_1}=\left(\begin{array}{*3{c}}
			0&6&0\\
			0&0&2\\
			5&1&0
			\end{array}\right)
			\mathbf{L_1}=\left(\begin{array}{*3{c}}
			1&1&5\\
			0&3&1\\
			0&1&0
			\end{array}\right)\]
			\[\mathbf{B_1}=\left(\begin{array}{*3{c}}
			0&5&2\\
			3&0&1\\
			0&6&0
			\end{array}\right)
			\mathbf{F_1}=\left(\begin{array}{*3{c}}
			0&1&0\\
			0&2&1\\
			3&0&4
			\end{array}\right)
			\mathbf{Q_2}=\left(\begin{array}{*3{c}}
			2&5&0\\
			2&5&0\\
			0&0&0
			\end{array}\right)
			\mathbf{D_2}=\left(\begin{array}{*3{c}}
			0&3&5\\
			2&0&4\\
			5&4&6
			\end{array}\right)
			\mathbf{A_2}=\left(\begin{array}{*3{c}}
			2&6&0\\
			5&0&4\\
			3&1&1
			\end{array}\right)
			\]
			\[\mathbf{E_2}=\left(\begin{array}{*3{c}}
			3&1&0\\
			1&0&1\\
			0&1&2
			\end{array}\right)
			\mathbf{L_2}=\left(\begin{array}{*3{c}}
			1&6&1\\
			0&0&1\\
			0&1&0
			\end{array}\right)
			\mathbf{B_2}=\left(\begin{array}{*3{c}}
			1&2&0\\
			0&3&1\\
			6&1&0
			\end{array}\right)
			\mathbf{F_2}=\left(\begin{array}{*3{c}}
			0&3&0\\
			5&0&2\\
			1&0&0
			\end{array}\right)
	,\mathbf{W}=\left(\begin{array}{*3{c}}
			0&1&0\\
			0&2&1\\
			3&0&4
			\end{array}\right),\]\[
			\mathbf{M}=\left(\begin{array}{*3{c}}
			0&1&0\\
			0&0&1\\
			0&0&0
			\end{array}\right)
			\]
		}
			Figure~\ref{fig:example} depicts the error vector between the $\plainvec(t)$ (regular line) and $\hplainvec(t)$ (dotted line) with $\state(0)=(2,4,1)^{T}$, $\hstate(0)=(0,2,4)^{T}$, $\ciphervec=(1,4,4)^T$ and $\memvec=(0,0,0)^T$. Note that when $t\geq t_{_{s}}=2$ we have $\hstate(t+1)=\state(t)$ and
		after two clocks both sequences are synchronized. 
	}\end{exm}

		\begin{figure}[t]
	
		\centering
	  	\includegraphics[width=80mm]{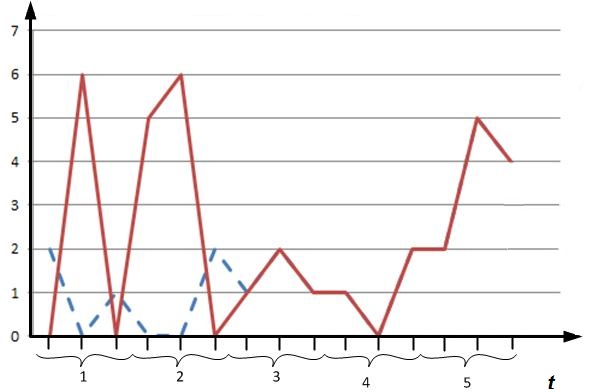}
		\caption{System synchronization (see Example~\ref{exm:sync})}
	 \label{fig:example}
		\end{figure}

\section{Security analysis}\label{sec:cryptoanalysis}

In this section we analyze the security of $\matr{S}_{\sigma}^4(\mathfrak{P})$   
in the LORBACPA$^+$ model. In what follows we always assume that the secret key, $\key$,
is a $k$-bit string.
Also, $\mathfrak{U}$ refers to the ensemble of truly (i.e. uniformly) random permutations of $\Field_{q}$, while $\mathfrak{P}$ is a pseudorandom emsemble of permutations used
in the cryptosystem $\matr{S}_{\sigma}^4(\mathfrak{P})$.

Recall that $\matr{S}_{\sigma}^4(\mathfrak{P})$ is said to be LORBACPA$^+$ secure, if for all probabilistic polynomial-time adversaries $\mathcal{A}$ (as in Definition~\ref{def:exp}),
$$\Insec^{LORBACPA^+}_{{\matr{S}_{\sigma}^4(\mathfrak{P})}}(k)=
\displaystyle{\max_{{\mathcal A}}}\
\Adv^{LORBACPA^+}_{ \mathcal{A},{\matr{S}_{\sigma}^4(\mathfrak{P})}}
\leq \negl(k).
$$	
Before we proceed, let us fix the setup. Assuming the existence of an adversary $\mathcal{A}$ in the security model LORBACPA$^+$, then $1 \leq i \leq s$ is the indicator of query sessions where it is assumed that the whole number of sessions is equal to $s$. In this setting $s_{_{E}}$ is the number of query sessions initiated by the oracle $E$ and $s_{_{SE}}$ is the corresponding number for the oracle $SE$.
Similarly, $\nu_{_{E}}$ (resp. $\nu_{_{SE}}$) is the whole number of queries 
asked from the oracle $E$ within $s_{_{E}}$ sessions (resp. $SE$ within $s_{_{SE}}$ sessions).

Note that each one of $\nu_{_E}$ queries from $E$ made by the adversary $\mathcal{A}$ consists of a pair of equal length messages $\plainvec_{i,E}^0(t)$ and $\plainvec_{i,E}^1(t)$ as vectors in $\Field_{_{q}}^{n}$ along with the vector $\matr{IV}_{i}$ related to initialization of session $i$.
 For each $1 \leq i \leq \nu_{_E}$  the vector $(\ciphervec_{i,E}^{b}(\tau), \matr{IV}_{i})$ stands for the answer of the oracle $E$ to the query
$(\plainvec^0_{i,E}(\tau),\plainvec^1_{i,E}(\tau),\matr{IV}_{i})=q(E,i,\tau,\nu)$ for which $\state_{i}(\tau)$ is the internal state of the
oracle $E$ simulating the encryption scheme in session $i$ with the memory vector $\memvec_{i}(\tau)$. 
Similarly,
for each $1 \leq i \leq \nu_{_{SE}}$ the vector $(\ciphervec_{i,SE}^{b}(\tau), \matr{IV}_{i})$ stands for the answer to the query $(\plainvec^0_{i,SE}(\tau),\plainvec^1_{i,SE}(\tau),\matr{IV}_{i})$
from the oracle $SE$ with the internal state $\hstate_{i}(\tau)$ and the memory vector $\widehat{\memvec}_{i}(\tau)$.
Recall that in this setting $\tau$ is the local counter dedicated to each session while 
$\nu$ is the global query counter of the algorithm.

Also, note that for $b \in \{0,1\}$, within the oracle $E(\matr{IV},b)$
we have 
\begin{equation}\label{eq:psi1}
\begin{array}{l}
\quad \ciphervec_{i,E}^{b}(\tau+1)=\Lmat\state_{i}(\tau)+ \Bmat\BigPi_{_{\key}}(\state_{i}(\tau))+\Fmat\BigPi_{_{\key}}({\plainvec^b}_{i,E}(\tau)),
\end{array}
\end{equation}
and within $SE(\matr{IV},b)$ we have 
\begin{equation}\label{eq:psi2}
\begin{array}{l}
\quad \ciphervec_{i,SE}^{b}(\tau+1)=\Lmat\hstate_{i}(\tau+1)+ \Bmat\BigPi_{_{\key}}(\hstate_{i}(\tau+1))+\Fmat\BigPi_{_{\key}}({\plainvec^b}_{i,SE}(\tau)).
\end{array}
\end{equation}
As an standard stage of the security proof, we first reduce the security of $\matr{S}_{\sigma}^4(\mathfrak{P})$ to that of $\matr{S}_{\sigma}^4(\mathfrak{U})$ in the following 
proposition.

\begin{pro} \label{pro:randomcase}
	Considering two cryptosystems $\matr{S}_{\sigma}^4(\mathfrak{P})$ and
	$\matr{S}_{\sigma}^4(\mathfrak{U})$ with the same set of parameters, then
	\[|\Adv^{LORBACPA^+}_{ \mathcal{A},{\matr{S}_{\sigma}^4(\mathfrak{P})}}(k) - \Adv^{LORBACPA^+}_{ \mathcal{A},{\matr{S}_{\sigma}^4(\mathfrak{U})}}(k)| < \negl(k).\]
\end{pro}
\begin{proof}{
		
		Consider an adversary $\mathcal{A}$ against $\matr{S}_{\sigma}^4(\mathfrak{P})$ within the setting of LORBACPA$^+$ security model. Using this adversary, we construct a distinguisher $\mathcal{D}$, for $\mathfrak{P}$.
		
		More precisely, the distinguisher $\mathcal{D}$ interacts
		with a permutation oracle $O_{\pi}$, that in the beginning of the game, flips a random bit $b$ and if $b = 0$, chooses a random permutation $\pi \leftarrow \mathfrak{U}$, while otherwise, if $b = 1$, the oracle chooses a permutation $\pi \leftarrow \mathfrak{P}$.
		
		To this end, since $\mathcal{D}$ uses the adversary $\mathcal{A}$ as a subroutine, it has to simulate the environment of the adversary $\mathcal{A}$.
		For this, let us describe how $\mathcal{D}$ answers the queries made by $\mathcal{A}$. The distinguisher $\mathcal{D}$ runs $\mathcal{A}$ and has to concurrently answer	the block queries used to be answered by the oracles $SE(\matr{IV},b)$ or $E(\matr{IV},b)$. In detail, $\mathcal{D}$ works as follows (see Algorithm~\ref{alg:distinguisher}):\\ 
		
		\textbf{Distinguisher $\mathcal{D}$}:\\
		
		The input of $\mathcal{D}$ is $1^k$ and has access to a permutation oracle $O_{\pi}$.
		\begin{itemize}
			\item[1-] \textbf{ Initialization}: $\mathcal{D}$ picks at random a bit $b \in \{0, 1\}$ chosen uniformly at random and sets $O_{\pi}$.
			
			\item[2-] \textbf{Running $\mathcal{A}$ and answering queries}: $\mathcal{D}$ runs $\mathcal{A}$, answering queries via the subroutine $LR$ described below. 
			\begin{itemize}
				\item[-] \textbf{$\mathcal{A}$ feeds}: $\mathcal{A}$ can submit queries $(\plainvec^{0}_{i,E}(\tau), \plainvec^{1}_{i,E}(\tau),\matr{IV}_{i})$ or $(\plainvec^{0}_{i',SE}(\tau), \plainvec^{1}_{i',SE}(\tau),\matr{IV}_{i'})$ where $i$ and $i'$ are session indicators.
				\item[-] \textbf{ $\mathcal{D}$ answers}:  If $\mathcal{D}$ receives  a query $(\plainvec^{0}_{i,E}(\tau), \plainvec^{1}_{i,E}(\tau),\matr{IV}_{i})$ it returns $(\ciphervec_{i,E}^{b}(\tau), \matr{IV}_{i})$ to $\mathcal{A}$, and if it receives a query $(\plainvec^{0}_{i',SE}(\tau), \plainvec^{1}_{i',SE}(\tau),\matr{IV}_{i'})$ it returns $(\ciphervec_{i',SE}^{b}(\tau), \matr{IV}_{i'})$ to $\mathcal{A}$. The key points are as follows:
				\begin{itemize}
					\item[.] If $b=1$, $\mathcal{D}$'s oracle uses $\pi \leftarrow \mathfrak{P}$ (a pseudorandom permutation). In this case if $\mathcal{D}$ receives query $(\plainvec^{0}_{i,E}(\tau), \plainvec^{1}_{i,E}(\tau),\matr{IV}_{i})$ it always simulates the oracle $E(\matr{IV},1)$ for $\matr{S}_{\sigma}^4(\mathfrak{P})$ and
					returns $(\ciphervec_{i,E}^{b}(\tau), \matr{IV}_{i})$ to $\mathcal{A}$. Also, if a query $(\plainvec^{0}_{i',SE}(\tau), \plainvec^{1}_{i',SE}(\tau),\matr{IV}_{i})$ is received, then  $SE(\matr{IV},1)$ is simulated for $\matr{S}_{\sigma}^4(\mathfrak{P})$ and  $(\ciphervec_{i',SE}^{b}(\tau), \matr{IV}_{i'})$ is returned  to $\mathcal{A}$.
					\item[.] If $b=0$, $\mathcal{D}$ 's oracle uses $\pi\leftarrow \mathfrak{U}$ (a truly random
					permutation).
					In this case if $\mathcal{D}$ receives query $(\plainvec^{0}_{i,E}(\tau), \plainvec^{1}_{i,E}(\tau),\matr{IV}_{i})$ it always simulates the oracle $E(\matr{IV},0)$ for $\matr{S}_{\sigma}^4(\mathfrak{U})$ and
					returns $(\ciphervec_{i,E}^{b}(\tau), \matr{IV}_{i})$ to $\mathcal{A}$. Also, if a query $(\plainvec^{0}_{i',SE}(\tau), \plainvec^{1}_{i',SE}(\tau),\matr{IV}_{i})$ is received, then  $SE(\matr{IV},0)$ is simulated for $\matr{S}_{\sigma}^4(\mathfrak{U})$ and  $(\ciphervec_{i',SE}^{b}(\tau), \matr{IV}_{i'})$ is returned  to $\mathcal{A}$.
				\end{itemize}
			\end{itemize}
			\item[2-]\textbf{Final stage}: Continue answering any oracle queries of $\mathcal{A}$ as described above, and at the end of the game, let $b'$ be the output of $\mathcal{A}$. Then, $\mathcal{D}$ outputs $1$ if $b=b'$ and outputs $0$ otherwise.
			
		\end{itemize}
			
		We have
		\[\begin{array}{ll}
		\Adv^{LORBACPA^+}_{ \mathcal{A},{\matr{S}_{\sigma}^4(\mathfrak{P})}}(k)& =2| Pr(output({\mathcal A})=1)-\dfrac{1}{2}|\\&=2| Pr(b' = b\mid b=1)-\dfrac{1}{2}|.
		\end{array}
		\]
		and
		\[\begin{array}{ll}
		\Adv^{LORBACPA^+}_{ \mathcal{A},{\matr{S}_{\sigma}^4(\mathfrak{U})}}(k)& =2| Pr(output({\mathcal A})=1)-\dfrac{1}{2}|\\&=2| Pr(b' = b\mid b=0)-\dfrac{1}{2}|.
		\end{array}
		\]
		Since we know that  ${\bf Adv}_{\mathfrak{P};D}(k)$ is a negligible function of the security parameter $k$, we have
		\[\begin{array}{ll}
		{\bf Adv}_{\mathfrak{P};D}(k) &=2| Pr(output({\mathcal D})=1)-\dfrac{1}{2}|\\
		&=| Pr(b' = b\mid b = 1)-Pr(b' =b\mid b = 0)|\\
		&=\dfrac{1}{2} |{\bf Adv}^{LORBACPA^+}_{ \mathcal{A},{\matr{S}_{\sigma}^4(\mathfrak{P})}}(k)- {\bf Adv}^{LORBACPA^+}_{ \mathcal{A},{\matr{S}_{\sigma}^4(\mathfrak{U})}}(k)| < \negl(k).
		\end{array}
		\]
}\end{proof}

	\begin{algorithm}[t]
	\caption{Distinguisher $\mathcal{D}$ (with access to $O_{\pi}$)}
	\label{alg:distinguisher}
	\begin{multicols}{2}
		\begin{algorithmic}
			\State
			\Procedure{Initialize}{}
			\State $b\overset{\$}{\leftarrow} \{0, 1\}$
			\EndProcedure
			\\
			\Procedure{run $\mathcal{A}$}{}
			\State $b'\overset{\$}{\leftarrow} \mathcal{A}^{LR} $
			\EndProcedure
			
			\\
			\Procedure{Finalization}{}
			\State Output $1$ if $b'=b$,
			\State Output $0$ otherwise.
			\EndProcedure
			\pagebreak
			
			\Procedure{$LR(\plainvec^{0}_{i,type}(t), \plainvec^{1}_{i,type}(t),\matr{IV}_{i})$}{}
			{\scriptsize
				\If {$type=E$}
				\State Set $E(\matr{IV},b)$ thanks to $O_{\pi}$
				\State $(\plainvec^{0}_{i,E}(t), \plainvec^{1}_{i,E}(t),\matr{IV}_{i}) \rightarrow E(\matr{IV}_{i},b)$
				\State $(\ciphervec_{i,E}^{b}(t), \matr{IV}_{i}) \leftarrow E(\matr{IV}_{i},b)$
				
				\Return $(\ciphervec_{i,E}^{b}(t), \matr{IV}_i)$
				\EndIf
				\If {$type=SE$}
				\State Set $SE(\matr{IV},b)$ thanks to $O_{\pi}$
				\State $(\plainvec^{0}_{i,SE}(t), \plainvec^{1}_{i,SE}(t),\matr{IV}_{i}) \rightarrow SE(\matr{IV}_{i},b)$
				\State $(\ciphervec_{i,SE}^{b}(t), \matr{IV}_{i}) \leftarrow SE(\matr{IV}_{i},b)$
				
				\Return $(\ciphervec_{i,SE}^{b}(t), \matr{IV}_i).$
				
				\EndIf
			}
			\EndProcedure
		\end{algorithmic}
	\end{multicols}
\end{algorithm}

%%%%%%%%%%%%%%%%%%%%%%%%%%%%%%%%%%%%%%%%%%%%%%%%%%%%%%%%%%%%%%%%%%%%%%%%%%%%%%%%%%%%%%%

Let $A$ be an adversary in the LORBACPA$^+$ security model as a randomized algorithm having interactions with LORBA encryption oracles $E$ and $SE$.
In this setting, let 
{\footnotesize
	$$ (\plainvec_{_{1}}^0,\plainvec_{_{1}}^1,i_{_{q_{_{1}}}})=q_{_{1}}(o_{_{1}},i_{_{q_{_{1}}}},\tau_{_{q_{_{1}}}},1),\  (\plainvec_{_{2}}^0,\plainvec_{_{2}}^1,i_{_{q_{_{2}}}})=q_{_{2}}(o_{_{2}},i_{_{q_{_{2}}}},\tau_{_{q_{_{2}}}},2),\ 
	\ldots,\ 
	(\plainvec_{_{r}}^0,\plainvec_{_{r}}^1,i_{_{q_{_{r}}}})=q_{_{r}}(o_{_{r}},i_{_{q_{_{r}}}},\tau_{_{q_{_{r}}}},r),$$
}
be the consecutive $r$ queries that the adversary asks from its oracles, and also let 
{\footnotesize
	$$\ciphervec^{b}(1)=\varepsilon_{_{\theta}}(\state(\tau_{_{q_{_{1}}}}+\delta_{o_{_{1}}}),\plainvec^b_{_1}(\tau_{_{q_{_{1}}}})), \ 
	\ciphervec^{b}(2)=\varepsilon_{_{\theta}}(\state(\tau_{_{q_{_{2}}}}+\delta_{o_{_{2}}}),\plainvec^b_{_2}(\tau_{_{q_{_{2}}}})), \
	\ldots, \ 
	\ciphervec^{b}(r)=\varepsilon_{_{\theta}}(\state(\tau_{_{q_{_{r}}}}+\delta_{o_{_{r}}}),\plainvec^b_{_r}(\tau_{_{q_{_{r}}}})),
	$$
}
be the corresponding consecutive answers, with $\delta_{_{SE}}=1$ and $\delta_{_E}=0$. Note that in this setting and for a fixed session $i$,
$l$ consecutive queries of session $i$ can be described as
$$ (\plainvec_{_{1}}^0,\plainvec_{_{1}}^1,i)=q_{_{j_1}}(o_{_{j_1}},i,1,\nu_{_{1}}),\ 
(\plainvec_{_{2}}^0,\plainvec_{_{2}}^1,i)=q_{_{j_2}}(o_{_{j_2}},i,2,\nu_{_{2}}),\ 
\ldots,
(\plainvec_{_{l}}^0,\plainvec_{_{l}}^1,i)=q_{_{j_{_{l}}}}(o_{_{j_{_{l}}}},i,l,\nu_{_{l}}),\ 
$$
where the answer to the $\tau$;th query $q_{_{j_\tau}}(o_{_{j_\tau}},i,\tau,\nu_{_{\tau}})=(\plainvec_{_{\tau}}^0,\plainvec_{_{\tau}}^1,i)$
is
\begin{equation}\label{eq:psi3}
\ciphervec^{b}(\nu_{_{\tau+1}})-\Fmat\BigPi_{_{\key}}(\plainvec^b_{_{\nu_{_{\tau}}}}(\tau)) \isdef
\Psi^b_{o}(\ciphervec^{b}(\nu_{_{\tau-1-\mu_{_{0}}}}),\ldots,\ciphervec^{b}(\nu_{_{\tau-1}})),
\end{equation}
where $\Psi^b_{o}$ is defined using Equations~\ref{eq:psi1} and \ref{eq:psi2}.
Moreover, the probability space is generated by the random bits used in the experiment for 
choosing the bit $b$, the key $\kappa$ (containing the information necessary to reconstruct the secret matrices and the secret random permutation) and the initial states of the oracles
$$\state(\delta_{o_{_{1}}}), \state(\delta_{o_{_{2}}}), \ldots, \state(\delta_{o_{_{r}}}),$$
 as well as the random bits used by the adversary (i.e. the randomized algorithm), 
 Also, note that $r$ is bounded by a polynomial of the security parameter $k$ since the adversary is a polynomial-time algorithm.

We define the following partial order on the sequence of queries (and consequently on answers),
$$ q_{_{u}}(o_{_{u}},i_{_{q_{_{u}}}},\tau_{_{q_{_{u}}}},\nu_{_{q_{_{u}}}}) \leq q_{_{v}}(o_{_{v}},i_{_{q_{_{v}}}},\tau_{_{q_{_{v}}}},\nu_{_{q_{_{v}}}}) 
\ \Leftrightarrow \ \left(i_{_{q_{_{u}}}}=i_{_{q_{_{v}}}} \ \& \ \tau_{_{q_{_{u}}}} \leq \tau_{_{q_{_{v}}}}\right).$$
Note that the inequality also implies that $\nu_{_{q_{_{u}}}} \leq \nu_{_{q_{_{v}}}}$ where, clearly, the partial order turns the set of queries (and consequently on answers)
into a well-founded (i.e. Noetherian) set.

Now, if $\nu > 0$, let $Col_{_{\nu}}$	be the event that for some $j_{_{1}} \leq \nu$ and $j_{_{2}} \leq \nu$ we have $\ciphervec^{b}(j_{_{1}})=\ciphervec^{b}(j_{_{2}})$ for some $b \in \{0,1\}$. Also, define $Col \isdef Col_{_{r}}$.

 Let $\xi \isdef (\ciphervec_{_{0}}, \cdots, \ciphervec_{_{\nu}}) $ be fixed sequence of blocks, 
 let $q_{_{\nu}}(o_{_{\nu}},i,\tau,\nu)$ be the $\nu$th query (i.e. $\nu \isdef \nu_{_{\tau}}$), define $\mu_{_{0}} \isdef m_{_{0}}+n_{_{0}}=m_{_{0}}+t_{_{s}}$ and also let
 $$q_{_{\nu_{_{\tau-1-\mu_{_{0}}}}}}(o_{_{\nu_{_{\tau-1-\mu_{_{0}}}}}},i,\tau-1-\mu_{_{0}},\nu_{_{\tau-1-\mu_{_{0}}}}),\ldots,q_{_{\nu_{_{\tau-1}}}}(o_{_{\nu_{_{\tau-1}}}},i,\tau-1,\nu_{_{\tau-1}}),$$
 be $\mu_{_{0}}+1$ of its consecutive predecessors (in the $i$th session)  with the following vector of answers,
 $$\varrho^b=(\ciphervec^{b}(\nu_{_{\tau-1-\mu_{_{0}}}}),\ldots,\ciphervec^{b}(\nu_{_{\tau-1}})).$$ 

Define $H_{_{\nu}}(\xi) \isdef A^0_{_{\nu}}(\xi) \cup A^1_{_{\nu}}(\xi)$ in which 
$$A^b_{_{\nu}}(\xi) \isdef \{ ({\bf x}_{_{\mu_{_{0}}}}, \cdots, {\bf x}_{_{0}})  \ | \ \exists\  j \leq \nu-1, \ \   \ciphervec_{_{j}}-\Fmat\BigPi_{_{\key}}(\plainvec^{b}_{_{j}})=\Psi^b_{o}({\bf x}_{_{0}}, \cdots, {\bf x}_{_{\mu_{_{0}}}})\}.$$

Our main objective in what follows is to prove that not only the probability of the event $Col$ is negligible but also the probability of success for the adversary conditioned to having no collision is also a negligible function of the security parameter. 
Formally, we have to prove the following statements.

\begin{pro} \label{thm:main}
	Let $\mathcal{A}$ be an adversary for $ \matr{S}_{\sigma}^4(\mathfrak{U})$ within the setting of LORBACPA$^+$ security model. Then for any $b \in \{0,1\}$ there exist a negligible function $\negl_{_{0}}$ such that 
	\begin{itemize}
		
		\item[{\rm a)}] $Pr_0(output({\mathcal A})=1|\overline{Col})=Pr_1(output({\mathcal A})=1|\overline{Col})$.		
		\item[{\rm b)}] $Pr_{0}(Col)=Pr_{1}(Col)$.
		\item[{\rm c)}] $Pr_b(Col) \leq \negl_{_{0}}(k).$
	\end{itemize}
\end{pro}

\begin{proof}{
       {\bf (a) $\Rightarrow$} For this part,  let $\xi \isdef (\ciphervec_{_{0}}, \cdots, \ciphervec_{_{\nu}}) $ be fixed sequence of answers, let $q_{_{\nu}}(o_{_{\nu}},i,\tau,\nu)$ be the $\nu$th query (i.e. $\nu \isdef \nu_{_{\tau}}$) and also let
		$$q_{_{\nu_{_{\tau-1-\mu_{_{0}}}}}}(o_{_{\nu_{_{\tau-1-\mu_{_{0}}}}}},i,\tau-1-\mu_{_{0}},\nu_{_{\tau-1-\mu_{_{0}}}}),\ldots,q_{_{\nu_{_{\tau-1}}}}(o_{_{\nu_{_{\tau-1}}}},i,\tau-1,\nu_{_{\tau-1}}),$$
		 be $\mu_{_{0}}=t_{_{s}}+m$ of its consecutive predecessors (in the $i$th session)  with the following vector of answers,
		 $$\varrho^b=(\ciphervec^{b}(\nu_{_{\tau-1-\mu_{_{0}}}}),\ldots,\ciphervec^{b}(\nu_{_{\tau-1}})).$$ 
		Let $\xi' \isdef (\ciphervec_{_{\nu_{_{\tau-1-\mu_{_{0}}}}}}, \cdots, \ciphervec_{_{\nu_{_{\tau-1}}}}) $. We use well-founded induction to prove that for any $\nu$ we have 
		$$  Pr_{0}(\varrho^0=\xi'|\overline{Col_{_{\nu}}}) = Pr_{1}(\varrho^1=\xi'|\overline{Col_{_{\nu}}}).$$
		Since the distribution of the initial states, and consequently, the answers to the first queries are uniformly distributed the base of the well-founded induction holds.
		
		For induction step, note that,
		$$ Pr_{b,\overline{Col_{_{\nu}}}}(\varrho^b=\xi')=Pr_{b,\overline{Col_{_{\nu}}}}(\varrho^b=\xi'|\xi' \in H_{_{\nu}}(\xi))
		Pr_{b,\overline{Col_{_{\nu}}}}(\xi' \in H_{_{\nu}}(\xi))+$$
		$$ Pr_{b,\overline{Col_{_{\nu}}}}(\varrho^b=\xi'|\xi' \not \in H_{_{\nu}}(\xi))
		Pr_{b,\overline{Col_{_{\nu}}}}(\xi' \not \in H_{_{\nu}}(\xi)) $$
		First, note that the event $\xi' \in H_{_{\nu}}(\xi)$ does not depend on $b$, since the size of the set $H_{_{\nu}}(\xi)$ does not depend on $b$.		
		On the other hand, $\xi' \in H_{_{\nu}}(\xi)$ implies the event $Col_{_{\nu}}$, indicating that the first term in the sum is equal to zero. Also, 
		the union of $\overline{Col_{_{\nu-1}}}$ and $\xi' \not \in H_{_{\nu}}(\xi)$ is equal to $\overline{Col_{_{\nu}}}$, hence
		$$ Pr_{b,\overline{Col_{_{\nu}}}}(\varrho^b=\xi'|\xi' \not \in H_{_{\nu}}(\xi))= Pr_{b}(\varrho^b=\xi'|\overline{Col_{_{\nu-1}}}),$$
		 that does not depend on $b$ by induction hypothesis.
		
		{\bf (b) $\Rightarrow$} For this case, we use well-founded induction to prove that for any $\nu$ we have $Pr_{0}(Col_{_{\nu}})=Pr_{1}(Col_{_{\nu}})$. Note that the equality 
		is trivially true for the minimal elements (corresponding to the initial states). 
		
		Within the same setting as in Part$(a)$, for any $b \in \{0,1\}$ we have,
		$$ Pr_{b}(Col_{_{\nu}})=Pr_{b}(Col_{_{\nu}}|Col_{_{\nu-1}}) Pr_{b}(Col_{_{\nu-1}})+Pr_{b}(Col_{_{\nu}}|\overline{Col_{_{\nu-1}}}) Pr_{b}(\overline{Col_{_{\nu-1}}}).$$
		Since $Pr_{b}(Col_{_{\nu}}|Col_{_{\nu-1}})=1$, by induction the first term of the sum is independent of $b$. Hence, by induction, it suffices to prove that 	$Pr_{b}(Col_{_{\nu}}|\overline{Col_{_{\nu-1}}})$ is independent of $b$.	But,
		$$Pr_{b}(Col_{_{\nu}}|\overline{Col_{_{\nu-1}}})=\displaystyle{\sum_{\xi'}} Pr_{b}(Col_{_{\nu}}|\overline{Col_{_{\nu-1}}} \ \& \ \varrho^b=\xi') Pr_{b}(\varrho^b=\xi'|\overline{Col_{_{\nu-1}}}).$$
		In each term, The second component is independent of the bit $b$ by the proof of part $(a)$. For the first term, 
		note that the probability only depends on the vectors 
		$\ciphervec^{b}(\nu_{_{\tau+1}})$
		satisfying Equation~\ref{eq:psi3}. But, since the initial states are uniformly chosen random vectors, the matrix $\Wmat$ and for any $j \in [ \ell ]$, the matrices  $\Lmat_{j}, \Fmat_{j}$ are uniformly chosen random invertible matrices,
		and $\BigPi_{_{\key}}$ is a uniformly chosen random permutation, the distributions of vectors
		$\Fmat_{\sigma(h)}^{-1} (\ciphervec(h))$,
		$\Fmat_{\sigma(h)}^{-1}\ciphervec(h)$, 
		$\Wmat\memvec(h)$,
		$\Wmat\widehat{\memvec}(h+1)$,
		$\Lmat\state_{i}(\tau)$,
		$\Lmat\hstate_{i}(\tau+1)$,
		and 
		$\Fmat\BigPi_{_{\key}}(\plainvec^b_{_{\nu_{_{\tau}}}}(\tau))$
		are the same, and consequently, by Equations~\ref{eq:psi1}, \ref{eq:psi2} and \ref{eq:psi3}
		the probability $Pr_{b}(Col_{_{\nu}}|\overline{Col_{_{\nu-1}}} \ \& \ \varrho^b=\xi')$ does not depend on $b$ (see Section~\ref{sec:conclusion} for a discussion on this part).
		
		{\bf (c) $\Rightarrow$} For this part note that,
			$$Pr_{b}(Col) \leq \displaystyle{\sum_{\nu=2}^{r}}  Pr_{b}(Col_{_{\nu}}|\overline{Col_{_{\nu-1}}}).$$
			But considering Part~$(b)$ we know that $\ciphervec_{_{\nu}}$ is uniformly distributed,
			$$Pr_{b}(Col) \leq \displaystyle{\sum_{\nu=2}^{r}}  Pr_{b}(Col_{_{\nu}}|\overline{Col_{_{\nu-1}}}) \leq 
			\displaystyle{\sum_{\nu=2}^{r}} \frac{2(\nu -1)}{q^n} \leq 2r^2 2^{-k},$$
			which is a negligible function of $k$ since $r$ is bounded by a polynomial function of $k$.
		
}\end{proof}	

Clearly, using Proposition~\ref{thm:main} one may prove the main security result as follows.

\begin{thm} \label{thm:thm3}
The self-synchronized stream cipher $\matr{S}_{\sigma}^4(\mathfrak{P})$ is LORBACPA$^+$ secure.
\end{thm}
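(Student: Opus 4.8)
The plan is to obtain Theorem~\ref{thm:thm3} as a short combination of the two propositions already in hand, so that essentially no new probabilistic work is required at this level. First I would apply Proposition~\ref{pro:randomcase} to pass from the real system to the idealized one: since
$$|\Adv^{LORBACPA^+}_{\mathcal{A},\matr{S}_{\sigma}^4(\mathfrak{P})}(k)-\Adv^{LORBACPA^+}_{\mathcal{A},\matr{S}_{\sigma}^4(\mathfrak{U})}(k)|<\negl(k),$$
it is enough to bound the advantage of an arbitrary polynomial-time $\mathcal{A}$ against $\matr{S}_{\sigma}^4(\mathfrak{U})$, where the secret permutation is truly uniform. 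Chaining the two negligible terms at the end will then give the result for $\matr{S}_{\sigma}^4(\mathfrak{P})$.

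For the idealized system I would condition the run of $\mathcal{A}$ on the collision event $Col$ and its complement. Setting $\alpha_b\isdef Pr_b(output(\mathcal{A})=1|\overline{Col})$, $\beta_b\isdef Pr_b(output(\mathcal{A})=1|Col)$ and $p_b\isdef Pr_b(Col)$, the law of total probability gives
$$Pr_b(output(\mathcal{A})=1)=\beta_b\,p_b+\alpha_b\,(1-p_b).$$
Now I invoke Proposition~\ref{thm:main}: part~(a) forces $\alpha_0=\alpha_1\isdef\alpha$ and part~(b) forces $p_0=p_1\isdef p$, so the collision-free contributions $\alpha(1-p)$ are identical for the two values of the hidden bit and cancel when I form the advantage
$$\Adv^{LORBACPA^+}_{\mathcal{A},\matr{S}_{\sigma}^4(\mathfrak{U})}(k)=|Pr_1(output(\mathcal{A})=1)-Pr_0(output(\mathcal{A})=1)|=|\beta_1-\beta_0|\,p\leq p.$$
Part~(c) then bounds $p=Pr_b(Col)$ by $\negl_{_{0}}(k)$, so the idealized advantage is negligible; combined with the reduction above this makes $\Adv^{LORBACPA^+}_{\mathcal{A},\matr{S}_{\sigma}^4(\mathfrak{P})}(k)$ negligible for every $\mathcal{A}$, and taking the maximum over adversaries yields $\Insec^{LORBACPA^+}_{{\matr{S}_{\sigma}^4(\mathfrak{P})}}(k)<\negl(k)$, as claimed.

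Since the substance of the argument is discharged inside Proposition~\ref{thm:main}, the only real care at this level is bookkeeping. I would make sure the conditioning is carried out over exactly the probability space on which parts~(a)--(c) were established (the same key distribution, the same uniformly chosen invertible matrices $\Wmat$ and $\Lmat_j,\Fmat_j$, the same uniform permutation, and the same uniformly random initial states), and that the crude estimate $|\beta_1-\beta_0|\leq1$ is all that is needed once the $\alpha(1-p)$ terms have cancelled. I expect the single delicate point to be precisely that cancellation, because it uses parts~(a) and~(b) simultaneously: an asymmetry in either the collision-free success probability or in the collision probability across $b$ would leave a non-cancelling term and break the bound. Everything else is routine assembly.
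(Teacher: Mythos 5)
Your proposal is correct and follows essentially the same route as the paper's own proof: reduce to $\matr{S}_{\sigma}^4(\mathfrak{U})$ via Proposition~\ref{pro:randomcase}, split $Pr_b(output(\mathcal{A})=1)$ by total probability over the collision event, cancel the collision-free terms using parts~(a) and~(b) of Proposition~\ref{thm:main}, bound the residual term by part~(c), and chain the two negligible quantities. Your explicit $\alpha_b,\beta_b,p_b$ bookkeeping is just a cleaner rendering of the paper's one-line computation, so there is nothing substantive to add.
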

\begin{proof}{By Proposition~\ref{pro:randomcase} it suffices to prove the claim for $\matr{S}_{\sigma}^4(\mathfrak{U})$. For this we have
		
		{\scriptsize	\[\begin{array}{ll}
			Adv^{LORBACPA^+}_{ \mathcal{A},{\matr{S}_{\sigma}^4(\mathfrak{U})}}(k)&
			=Pr_{1}(output({\mathcal A})=1)-Pr_{0}(output({\mathcal A})=1)\\&=Pr_{1}(output({\mathcal A})=1\mid Col^1)Pr_{1}(Col^1)+Pr_{1}(output({\mathcal A})=1\mid \overline{Col^1})Pr_{1}(\overline{Col^1})\\&-Pr_{0}(output({\mathcal A})=1\mid Col^0)Pr_{0}(Col^0)-Pr_{0}(output({\mathcal A})=1\mid \overline{Col^0})Pr_{0}(\overline{Col^0})
			\\&=\left(Pr_{1}(output({\mathcal A})=1\mid Col^1)-Pr_{0}(output({\mathcal A})=1\mid Col^0)\right) \negl_{_{0}}(k)\\&\leq \negl'(\key).
			\end{array}
			\]}
		
		Consequently, we obtain
		\[\begin{array}{ll}
		\Adv^{LORBACPA^+}_{ \mathcal{A},{\matr{S}_{\sigma}^4(\mathfrak{P})}}(k) &\leq\Adv^{LORBACPA^+}_{ \mathcal{A},{\matr{S}_{\sigma}^4(\mathfrak{U})}}(k) + \negl'(k)\\&\leq  \negl(k).
		
		\end{array}\]
}\end{proof}

\section{Concluding remarks}\label{sec:conclusion}

In this article we introduced the security model LORBACPA$^+$ for self-synchronized stream ciphers which is stronger than the traditional blockwise
LOR-IND-CPA, and based on contributions of G.~Mill\'{e}rioux et.al., we introduced a new self-synchronized stream cipher $\matr{S}_{\sigma}^4(\mathfrak{P})$
which is secure in this stronger model. It is instructive to note that the main idea giving rise to this stronger security property is the fact that in the new setup
which is based on control theoretic unknown input observer design for the receiver, one is able to use totally random initial state vectors for encryption.

It is also interesting to have a control-theoretic view to our security proof as an {\it uncontrolability} result. From this point of view, an adversary is a stochastic discrete 
dynamical system gaining information from the answers it receives to its queries as control-inputs. Hence, the aim of the adversary is to make collisions for the answers since 
having no collision gives rise to a {\it no-information state} because of the uniform distribution of the answers (as a consequence of the uniform distribution of the initial states of the oracles and system secret parameters). This can be thought of as a game in which the adversary tries to make collisions while in the state of a self-avoiding walk (i.e. no collision) of this dynamical system, the adversary gains no information about $b$. Therefore, the whole scenario is to control the system for the objective of maximizing the probability of a collision (hopefully to become noticeable), where from this point of view, {\it security} can be interpreted as an {\it uncontrolability} property. In this setting, a simple intuition supporting our security proof is based on the facts that  being able to choose i.i.d random vectors for the initial states of the oracles guaranties that different oracles have independent trajectories, while the fact that the length of runs (i.e. walks) of each oracle is bounded by a polynomial function of the security parameter (since the adversary is a polynomial-time algorithm) and the fact that each step of the run as an $n$-dimensional vector has an exponential number of possibilities, makes sure that the probability of a collision is bounded by a negligible function of the security parameter.

Let us also add a couple of comments on practical issues. First, note that Part~$(b)$ of Proposition~\ref{thm:main} is still valid even if only the matrix $\Lmat$ and the permutation $\BigPi_{_{\key}}$ are secretly chosen uniformly at random,  however, since in practice, and in particular for small parameters used in a lightweight setting, 
real simulations fail to completely satisfy theoretical assumptions we have also added the matrices $\Fmat$ and $\Wmat$ to the secret parameters of the ciphersystem to ensure uniform randomization mixing (see Equations~\ref{eq:psi1} and \ref{eq:psi2}). On the other hand, we would like to mention that from a practical point of view, our ciphersystem is as close as to a CCA secure streamcipher while it still is error-resistant and self-synchronized, which in our opinion, makes it more applicable in comparison to a CCA secure stream cipher along with a synchronizer, at least in  noisy environments.

Naturally, more practical issues and applications concerning the implementation of this new streamcipher should be the subject of further investigations.

 %----------------------------------------------------section{Bibliography}-------------

\end{document}